\newtheorem{theorem}[thm]{Theorem}
\newtheorem{definition}[thm]{Definition}
\newtheorem{observation}[thm]{Observation}
\newtheorem{lemma}[thm]{Lemma}
\newtheorem{example}[thm]{Example}
\newtheorem{corollary}[thm]{Corollary}
\newcommand{\bigpl}{\mathlarger{(}}
\newcommand{\bbigpl}{\mathlarger{\mathlarger{(}}}
\newcommand{\bbbigpl}{\mathlarger{\mathlarger{\mathlarger{(}}}}
\newcommand{\bbbbigpl}{\mathlarger{\mathlarger{\mathlarger{\mathlarger{(}}}}}
\newcommand{\bigpr}{\mathlarger{)}}
\newcommand{\bbigpr}{\mathlarger{\mathlarger{)}}}
\newcommand{\bbbigpr}{\mathlarger{\mathlarger{\mathlarger{)}}}}
\newcommand{\bbbbigpr}{\mathlarger{\mathlarger{\mathlarger{\mathlarger{)}}}}}
\newcommand{\unif}{\ensuremath{\stackrel{{\scriptscriptstyle ?}}{=}}}
\newcommand{\Vars}{\ensuremath{\mathcal{V}}}
\newcommand{\Varsf}{\ensuremath{\mathcal{V}_f}}
\newcommand{\Varsi}{\ensuremath{\mathcal{V}_i}}
\newcommand{\dep}[1]{\ensuremath{\mathit{dep}(#1)}}
\newcommand{\arity}[1]{\ensuremath{\mathit{arity}(#1)}}
\newcommand{\sub}[1]{\ensuremath{\mathit{sub}(#1)}}
\newcommand{\mulOp}[1]{\ensuremath{\mathtt{mul}\mathtt{[} #1\mathtt{]}}}
\newcommand{\mulrOp}[1]{\ensuremath{\mathtt{mul}\mathtt{[} #1\mathtt{]}_{r}}}
\newcommand{\mullOp}[1]{\ensuremath{\mathtt{mul}\mathtt{[} #1\mathtt{]}_{l}}}
\newcommand{\cntOp}[1]{\ensuremath{\mathtt{cnt}\mathtt{[} #1\mathtt{]}}}
\newcommand{\cntrOp}[1]{\ensuremath{\mathtt{cnt}\mathtt{[} #1\mathtt{]}_{r}}}
\newcommand{\cntlOp}[1]{\ensuremath{\mathtt{cnt}\mathtt{[} #1\mathtt{]}_{l}}}
\newcommand{\cvtpOp}[1]{\ensuremath{\mathtt{cvt}\mathtt{[} #1\mathtt{]}^+}}
\newcommand{\cvtnOp}[1]{\ensuremath{\mathtt{cvt}\mathtt{[} #1\mathtt{]}^-}}
\newcommand{\cvtsOp}[1]{\ensuremath{\mathtt{cvt}\mathtt{[} #1\mathtt{]}^\star}}
\newcommand{\proofcase}[1]{\par\noindent\textbf{#1}}
\def\BibTeX{{\rm B\kern-.05em{\sc i\kern-.025em b}\kern-.08em
    T\kern-.1667em\lower.7ex\hbox{E}\kern-.125emX}}
\begin{document}

\title[One is all you need]{One is all you need: Associative Second-order Unification without First-order Variables}
  
\thanks{Funded by Czech Science Foundation Grant No. 22-06414L and Cost Action CA20111 EuroProofNet and the Austrian Science Fund (FWF) project AUTOSARD (36623), the European Research Council (ERC) project FormalWeb3 (Grant ID 101156734) and LASD (Grant ID: 101089343).}	

\author[D.M.~Cerna]{David M. Cerna\lmcsorcid{0000-0002-6352-603X}}[a,b]
\author[J.~Parsert]{Julian Parsert\lmcsorcid{0000-0002-5113-0767}}[c]

\address{Institute of Computer Science, Czech Academy of Sciences}	
\email{dcerna@cs.cas.cz}  
\address{Dynatrace Research}	
\email{david.cerna@dynatrace.com}  
\address{RPTU Kaiserslautern}	
\email{julian.parsert@gmail.com}  

\begin{abstract}
  We introduce a fragment of second-order unification, referred to as
  \emph{Second-Order Ground Unification (SOGU)}, with the following properties:
  (i) only one second-order variable is allowed and (ii) first-order variables
  do not occur. We study an equational variant of SOGU where the signature
  contains \textit{associative} binary function symbols (ASOGU) and show that
  Hilbert's 10$^{th}$ problem is reducible to ASOGU unifiability, thus proving
  undecidability. Our reduction provides a deeper understanding of the decidability boundary for 
(equational) second-order unification, as previous results required
  first-order variable occurrences, multiple second-order variables, and/or
  equational theories involving \textit{length-reducing} rewrite
  systems. Furthermore, our reduction holds even in the case when associativity
  of the binary function symbol is restricted to \emph{power associative},
  i.e. f(f(x,x),x)= f(x,f(x,x)).
\end{abstract}

\maketitle

\section{Introduction}\label{sect:introduction}
In general, unification is the process of equating symbolic
expressions. Second-order unification concerns symbolic expressions containing function variables, i.e., variables that take expressions as arguments. Such processes are fundamental to mathematics and computer science and are central to formal methods, verification, automated reasoning, interactive theorem proving, and various other areas. In addition, formal verification methods based on \emph{satisfiability modulo theories (SMT)} exploit various forms of
unification within the underlying theories and their implementations.

In this paper, we reduce Hilbert's 10$^{th}$ problem to a fragment of second-order unification restricted as follows: (i) only one second-order variable occurs, (ii) no first-order variables occur, and (iii) an associative binary function symbol is allowed to occur. 

Essentially, our encoding maps unknowns of the given polynomial to argument positions of the unique second-order variable occurring in the unification problem. Replacement of an unknown by a non-negative integer $n$ is equivalent to applying a substitution to the unification problem, which replaces \textbf{F} by a term containing $n$ occurrences of the associated bound variable.

Such unification problems are related to recent investigations aiming to increase the expressive power of SMT
by adding higher-order features~\cite{DBLP:conf/cade/BarbosaROTB19,Tourret24}. Some
methods for finding SMT models use synthesis
techniques~\cite{DBLP:conf/lpar/ParsertBJK23} such as \textit{Syntax-Guided Synthesis}
(SyGuS)~\cite{DBLP:series/natosec/AlurBDF0JKMMRSSSSTU15}, a common approach for
function synthesis problems. In some cases, SyGuS can be considered a form of
equational second-order unification where (i) only one second-order variable is
allowed and (ii) first-order variables do not occur. Often, enumerative SyGuS
solvers use \textit{Counterexample Guided Inductive
Synthesis} (CEGIS) ~\cite{DBLP:conf/cav/AbateDKKP18} to speed up the synthesis procedure
by leveraging ground instances of the problem. In the synthesis domain of
\textit{Programming-By-Example} (PBE), the goal is to find functions that satisfy a given
set of concrete input-output examples where no variables (other than the
synthesis target) are present~\cite{DBLP:journals/ftpl/GulwaniPS17}. Combining
these developments motivates the investigation of second-order unification,
including \emph{ground} cases without first-order variables. 

Already in the 1970s, Huet and Lucchesi proved the undecidability of
higher-order
logic~\cite{DBLP:journals/iandc/Huet73,lucchesi1972undecidability}. Concerning
the equational variant of higher-order
unification~\cite{Snyder90,NipkowQ91,DoughertyJ92}, undecidability follows from
Huet's result. Early investigations~\cite{Goubault94} discovered decidable
fragments of higher-order E-unification. However, these are not interesting to
the work presented in this paper. Goldfarb~\cite{DBLP:journals/tcs/Goldfarb81}
strengthened Huet's undecidability result by proving second-order unification
undecidable. Both results only concern the general unification problem, thus
motivating the search for decidable fragments and honing the undecidability
results (See~\cite{DBLP:conf/unif/Levy14} for a comprehensive survey). Known
decidable classes include \emph{Monadic
  Second-order}~\cite{DBLP:journals/apal/Farmer88,DBLP:journals/siamcomp/LevySV08},
\emph{Second-order Linear}~\cite{DBLP:conf/rta/Levy96}, \emph{Bounded
  Second-order}~\cite{DBLP:journals/iandc/Schmidt-Schauss04}, and \emph{Context
  Unification}~\cite{DBLP:conf/icalp/Jez14}. Concerning second-order
E-unification, the authors of~\cite{OTTO19981} present several decidable and
undecidable fragments using a reduction from word unification problems and
length-reducing equational theories. Undecidability of Second-order unification
has been shown for the following fragments:
\begin{itemize}
\item two second-order variables, no first-order
  variables,~\cite{DBLP:journals/iandc/LevyV00}
\item one second-order variable \emph{with at least eight} first-order
  variables~\cite{DBLP:conf/asian/GanzingerJV98},
\item one second-order variable with only ground arguments and first-order
  variables~\cite{DBLP:journals/iandc/LevyV00}, and
\item two second-order variables over a \textit{length-reducing} equational
  theory~\cite{OTTO19981}.
\end{itemize}
Interestingly, Levy~\cite{DBLP:conf/unif/Levy14} notes that the number of
second-order variables only play a minor role in the decidability of fragments, as Levy and
Veanes~\cite{DBLP:journals/iandc/LevyV00} provide a reduction translating
arbitrary second-order equations to equations containing only one second-order
variable and \emph{additional first-order variables}. These results immediately
lead to the following question:
\begin{quote}
  How important are first-order variables for the undecidability of second-order
  unification?
\end{quote}
To address this question, we investigate (\emph{associative}) \emph{second-order ground unification} where only one second-order variable (arbitrary occurrences) is allowed, \emph{no first-order variables} occur, and some binary function symbols are interpreted as associative. To this end, we introduce two functions related to the \emph{multiplicity operator}~\cite{DBLP:journals/jsyml/Vrijer85,DBLP:conf/csl/AccattoliK10}, the $n$-counter and the $n$-multiplier, that allow us to reason about the multiplicity of function symbols with respect to a given substitution and of function symbols introduced by the substitution, respectively. These functions allow us to describe the properties of the unification problem number theoretically. As a result, we can reduce finding solutions to \emph{Diophantine equations} to a \emph{unification condition} involving the structure of the substitution and $n$-counter, thus proving undecidability. In particular, our contributions are as follows:
\begin{itemize}
\item We introduce the $n$-counter and $n$-multiplier and prove essential
  properties of both.
\item We prove the undecidability of associative second-order unification with
  \emph{one} function variable and \emph{no} first-order variables by showing undecidability over a restricted signature: a single constant and a binary function symbol that is interpreted as power associative, i.e., f(x,f(x,x))=f(f(x,x),x). 
\end{itemize}
To illustrate our encoding, let us consider a few simple examples. First, consider the polynomial $p(x) = x-2$. We encode this polynomial as the following \textit{unification problem} where
$\textbf{F}$ is a second-order variable, $a$ is a constant, $g$ is an
associative binary function symbol:
\begin{gather*}
  \textbf{F}(g(a,a)) \unif g(a,a,\textbf{F}(a))
\end{gather*}
Observe that we \textit{flattened} nested associative function symbols, i.e., intermediary occurrences of
associative function symbols are dropped. Furthermore, the arity of $\textbf{F}$
is equivalent to the number of unknowns in $p(x)$. The consecutive \textbf{a}'s in the term  $g(a,a,\textbf{F}(a))$ denote the $2$ in $p(x)$. The minus sign is denoted by the consecutive \textbf{a}'s occurring on the right-hand side of the unification  problem and the absence of any prefix on the left-hand side.
The consecutive \textbf{a}'s in the term $\textbf{F}(g(a,a))$ denote the coefficient
of $x$ in $p(x)$ (i.e. 1), since subtracting the
number of occurrences of \textbf{a}'s in $\textbf{F}(a)$ from occurrences in $\textbf{F}(g(a,a))$ results in 1. 

The obvious solution to the equation $p(x) = 0$ is $x=2$. Observe that the substitution $ \{F\mapsto \lambda y.g(y,y)\}$ where the bound variable occurs twice is a solution to the unification problem we derived from $p(x)$. In our encoding, the number of occurrences of a bound variable in the solution is precisely the value that should be substituted into the associated unknown. 
\begin{gather*}
\textbf{F}(g(a,a))\{F\mapsto \lambda y.g(y,y)\} = (\lambda y.g(y,y))g(a,a) \rightarrow_{\beta} g(a,a,a,a)\\
g(a,a,\textbf{F}(a))\{F\mapsto \lambda y.g(y,y)\} = g(a,a,(\lambda y.g(y,y))a)\rightarrow_{\beta} g(a,a,a,a)
\end{gather*}
Let us consider a slightly more complex example: $p(x) = x^2 -4x + 3$. Note that we will only consider natural number solutions to the polynomial; thus, we have selected a quadratic with two natural number solutions for illustration. We encode this polynomial as the following \textit{unification problem} where
$\textbf{F}$ is a second-order variable, $a$ is a constant, $g$ is an
associative binary function symbol:
\begin{gather*}
  g(a,a,a,\textbf{F}(\textbf{F}(g(a,a))) \unif \textbf{F}(g(a,a,a,a,\textbf{F}(a)))
\end{gather*}

Observe that we now have nested function variables as the degree of $x$ is 2 in one of the monomials. Furthermore, consider the arguments to the outermost occurence of $\textbf{F}$, $\textbf{F}(g(a,a))$ and $g(a,a,a,a,\textbf{F}(a))$. Together, these terms form the unification problem for the polynomial $x-4$. When dealing with polynomials of higher degree, we group the monomials together by variable, reduce their degree by 1, and recursively call the procedure. 
 
It is easily derived that $p(x)$ has two solutions, namely, $x=3$ and $x=1$. From these solutions, we can derive substitutions for the unification problem with the appropriate number of bound variables, i.e., 
$ \{F\mapsto \lambda y.y\}$ and $ \{F\mapsto \lambda y.g(y,y,y)\}$. Observe the following: 
\begin{align*}
g(a,a,a,\textbf{F}(\textbf{F}(g(a,a)))\{F\mapsto \lambda y.y\}\ =&\ g(a,a,a,(\lambda y.y)((\lambda y.y)g(a,a)) \rightarrow_{\beta} g(a,a,a,a,a)\\
\textbf{F}(g(a,a,a,a,\textbf{F}(a)))\{F\mapsto \lambda y.y\}\ =&\ (\lambda y.y)(g(a,a,a,a,((\lambda y.y))a)\rightarrow_{\beta} g(a,a,a,a,a)
\end{align*}
We leave it as an exercise to verify the other solution, as it yields a term containing 21 occurrences of $a$. To illustrate how the number of unknowns influences the function variable, let us consider the polynomial $p(x,y) = xy-3x-2y+6$ or $(x-2)(y-3)$ when factored. The result unification problem is as follows (See Figure~\ref{fig:multiVarExample} for a graphical illustration):

\begin{gather*}
  g(a^6,\textbf{F}(\textbf{F}(a,g(a,a)),a)) \unif \textbf{F}(g(a,a,a,\textbf{F}(a,a)),g(a,a,a))
\end{gather*}
\begin{figure*}
  \centering
  \begin{tikzpicture}[ level 1/.style={level distance = 1.5cm, sibling distance
      = 1.5cm}, level 2/.style={level distance = 1.5cm, sibling distance =
      1.8cm}, level 3/.style={level distance = 1.5cm, sibling distance = 1cm},
    level 4/.style={level distance = 1.5cm, sibling distance = 1cm}, ]
    \node[xshift=-4cm]{$x\cdot y-3\cdot x-2\cdot y+6$} child {node {$6$}} child
    {node {$y-3$} child {node {$-3$}  } child {node
        {$0$}} child {node
        {$1$}} } child {node {$-2$}  };

    \node[xshift=6cm,yshift=1.5cm]{} child{ node {$F$} child {node {$g$} child
    {node {$a,a,a$}} child {node {$F$}  child {node
          {$a$}}child {node {$a$}} } } child {node {$g(a,a,a)$}} edge from parent[draw=none]};
      
     \node[xshift=0cm]{$g$} child {node {$a^6$}} child {node {$F$} child {node
        {$F$} child {node {$a$}} child {node {$g(a,a)$}} } child {node {$a$}} };

    \draw (-4,-5.5) node {$p(x,y)$}; \draw (0,-5.5) node {Left Side};
    \draw (6,-5.5) node {Right Side}; 
\draw[red,thick]  (-4,-1.5) circle (.47cm);
\draw[red,thick]  (.75,-1.5) circle (.4cm);
\draw[red,thick]  (5.1,-1.5) circle (.38cm);

  \end{tikzpicture}
  \caption{Construction of terms from a polynomial. The red circles denote the start of terms representing the encircled polynomial.}
  \label{fig:multiVarExample}
\end{figure*}
where $a^n$ abbreviates a sequence of $n$ occurrences of $a$. Observe that, within the outermost occurrence of \textbf{F}, each position contains a representation of a recursively derived polynomial. Furthermore, the groupings are not independent. Let use consider the nested terms in the $x$ position, $\textbf{F}(a,g(a,a))$ and $g(a,a,a,\textbf{F}(a,a))$ denoting the polynomial $y-3$. Even though $y-3$ does not contain the unknown $x$, we must include the $x$ position as the initial problem has two unknowns. We put a single \textbf{a} in the position of the second-order variable to denote no information. Now consider the  nested problem in the $y$ position, $a$ and $g(a,a,a)$ denoting the polynomial $-2$. Observe that even though $y$ occurs in both $xy$ and $-2y$, we cannot consider the $xy$ monomial as it was "used up" by the first problem. Thus, we only need to consider the representation of $-2$. This is the interdependence mentioned above. 

From the solution  $x=2$ and $y=1$, we can build the substitution $ \{F\mapsto \lambda y,z.g(y,y,z)\}$ that, when applied to the terms in the unification problem, results in a term with 15 occurrences of $a$. A more complex example is presented in Section~\ref{app:undecidable} after we introduce the encoding and prove important properties of it.

Observe that through our encoding, any decidable class of Diophantine equations
provides a decidable fragment of the second-order unification problem presented
in this work. Furthermore, our reduction uses a simple encoding that guarantees
the equation presented in Lemma~\ref{lem:MainEq} directly reduces to
$0=p(\overline{x_n})$ where $p(\overline{x_n})$ denotes a polynomial with
integer coefficients over the variables $x_1,\cdots,x_n$.

There are likely more intricate encodings based on the $n$-counter and the $n$-multiplier, which map polynomials to more interesting unification problems. 
Furthermore, it remains open whether second-order ground unification is decidable, i.e., without function symbols interpreted as \textit{power associative}.

The rest of the paper is as follows:
\begin{itemize}
\item In section~\ref{sec:mulcnt}, we introduce the concepts of \emph{$n$-multiplier} (Definition~\ref{def:nmulti}) and \emph{$n$-counter} (Definition~\ref{def:ncounter}). These concepts provide a method for counting the number of occurrences of a given constant within a term after applying a substitution with a fixed number of occurrences of each bound variable. 
\item We go on to show properties of these functions, specifically, how these functions relate to one another (Lemma~\ref{lem:FixingConstants}) and how the functions relate to unifiability of two terms (Lemma~\ref{lem:MainEq}). 
\item In section~\ref{app:undecidable}, we first focus on functions for manipulating polynomials. First, we introduce \emph{monomial groups} (Definition~\ref{def:monogrp}), which group monomials by the unknowns occurring within them. Specifically, we order the unknowns and place the monomials into the group associated with the smallest unknown that occurs within it. This restructuring of the polynomials allows us to recursively descend the structure and build terms from it. 
\item The function recursively building terms from the polynomials is the \emph{$n$-converter} (Definition~\ref{def:convert}), which has two modes depending on which side of the unification problem the term comes from. 
\item Theorem~\ref{prop:muleqpoly} illustrates the importance of these terms, computing the counter of the terms and then taking the difference results in the input polynomial. We use this result to complete the reduction (Lemma~\ref{lem:reduction-sound-complete}).
\end{itemize}

\section{preliminaries}\label{sect:prelims}
Let $\Sigma$ be a \emph{signature} of function symbols with a fixed arity. For
$f\in \Sigma$, the arity of $f$ is denoted by $\mathit{arity}(f)\geq 0$ and if
$\mathit{arity}(f)=0$ we refer to $f$ as a \emph{constant} (For the rest of the paper $a$ and $b$ are the only distinct constant symbols and $f$ and $g$ are the only distinct function symbols).

By $\Vars$, we denote a countably infinite set of \emph{variables}. Furthermore,
let $\Varsi,\Varsf \subset \Vars$ such that $\Vars_i\cap\Vars_f=\emptyset$. We refer to members
of $\Varsi$ as \emph{individual variables}, denoted by $x,y,z$, $\ldots$ and members
of $\Varsf$ as \emph{function variables}, denoted by $F,G,H$, $\ldots$. Members of
$\Varsf$ have an arity $> 0$ which we denote by $\mathit{arity}(F)$ where
$F\in\Varsf$. By $\Varsf^n$, where $n>0$, we denote the set of all function
variables with arity $n$.

We refer to members of the term algebra $\mathcal{T}(\Sigma,\Vars)$ as \emph{terms}. By
$\Varsi(t)$ and $\Varsf(t)$ ($\Varsf^n(t)$ for $n\geq 0$), we denote the set of
individual variables and function variables (with arity $=n$) occurring in $t$,
respectively. We refer to a term $t$ as \emph{$n$-second-order ground} ($n$-SOG)
if $\Varsi(t)=\emptyset$, $\Varsf(t)\not=\emptyset$ with
$\Varsf(t)\subset\Varsf^n$, \emph{first-order} if $\Varsf(t)=\emptyset$, and \emph{ground} if
$t$ is first-order and $\Varsi(t)=\emptyset$.
When possible, without causing confusion, we will abbreviate a sequence of terms
$t_1,\ldots, t_n$ by $\overline{t_n}$ where $n>0$. The \emph{set of subterms of a
  term $t$} is denoted $\sub{t}$. The \emph{number of occurrences of a symbol
  (or variable) $f$ in a term $t$} is denoted $\mathit{occ}_{\Sigma}(f,t)$.

A \emph{$n$-second-order ground ($n$-SOG) unification equation} has the form
$u\unif_F v$ where $u$ and $v$ are $n$-SOG terms and $F\in \Vars_f^n$ such that
$\Vars_f(u)=\{F\}$ and $\Vars_f(v)=\{F\}$. A \emph{$n$-second-order ground
  unification problem} ($n$-SOGU problem) is a pair $(\mathcal{U},F)$ where
$\mathcal{U}$ is a set of $n$-SOG unification equations and $F\in \Vars_f^n$ such that for
all $u\unif_G v\in \mathcal{U}$, $G=F$. Recall from the definition of $n$-SOG that
$\Varsi(u) = \Varsi(v) = \emptyset$. When possible, without causing confusion, we will write SOG, SOGU, $\ldots$ (i.e., drop the $n$- prefix)

We define the depth of a term $t$, denoted $\dep{t}$ inductively as follows: (i)
if $t\in \Varsi$ or $\arity{t}=0$, then $\dep{t} =1$, (ii) $F\in \Vars_f$ and
$t= F(t_1,\cdots, t_n)$, then $\dep{t} =1+ \max\{\dep{t_i}\mid 1\leq i\leq n\}$, and (iii) if
$t= f(t_1,\cdots, t_n)$, then $\dep{t} =1+ \max\{\dep{t_i}\mid 1\leq i\leq n\}$

A \emph{substitution} is a set of bindings of the form
$\{F_1\mapsto \lambda \overline{y_{l_1}}.t_1,\ldots F_k\mapsto \lambda \overline{y_{l_k}}.t_k,x_1\mapsto s_1,\ldots,
x_w\mapsto s_w\}$ where $k,w\geq 0$, for all $1\leq i \leq k$, $t_i$ is first-order and
$\Vars_i(t_i)\subseteq \{y_1,\ldots,y_{l_i}\}$, $\mathit{arity}(F_i)=l_i$, and for all
$1\leq i \leq w$, $s_i$ is ground. Given a substitution $\sigma$,
$\mathit{dom}_f(\sigma) = \{ F\mid \{F\sigma = \lambda \overline{x_n}.t\in \sigma\wedge F\in \Vars_f^n\}$ and
$\mathit{dom}_i(\sigma) = \{ x\mid x\sigma \not = x \wedge x\in \Vars_i\}$. We refer to a
substitution $\sigma$ as second-order when $\mathit{dom}_i(\sigma) = \emptyset $ and first-order
when $\mathit{dom}_f(\sigma) = \emptyset $.  We use postfix notation for substitution
applications, writing $t\sigma$ instead of $\sigma(t)$. Substitutions are denoted by
lowercase Greek letters. As usual, the application $t\sigma$ affects only the free
variable occurrences of $t$ whose free variable is found in
$\mathit{dom}_i(\sigma)$ and $\mathit{dom}_f(\sigma)$. Furthermore, we assume
$t\sigma$ to be in $\beta$-normal form unless otherwise stated. A substitution
$\sigma$ is a \emph{unifier} of an SOGU problem $(\mathcal{U},F)$, if
$\mathit{dom}_f(\sigma)= \{F\}$, $\mathit{dom}_i(\sigma)=\emptyset$, and for all
$u\unif_F v\in \mathcal{U}$, $u\sigma=v\sigma$.

The main result presented in the paper concerns \emph{second-order ground
  $E$-unification} where $E$ is a set of equational axioms.
\begin{definition}[Equational theory \cite{DBLP:books/daglib/0092409}]
  \label{def:equtheo}
  Let $E$ be a set of equational axioms. The relation
  \[
    \approx_E  = \{ (s,t) \in \mathcal{T}(\Sigma,\mathcal{V}) \times \mathcal{T}(\Sigma,\mathcal{V}) \; | \; E\models s\approx t\}
  \]
    is called the
  \emph{equational theory} induced by $E$.
\end{definition}
The relation
$\{ (s,t) \in \mathcal{T}(\Sigma,\mathcal{V}) \times \mathcal{T}(\Sigma,\mathcal{V}) \; | \; E\models (s,t)\}$ induced by a set of equalities
$E$ gives the set of equalities satisfied by all structures in the theory of
$E$. We will use the notation $s\approx_E t$ for $(s,t)$ belonging to this set.

In particular, we consider \emph{associative} theories
$$\mathcal{A} = \{f(x,f(y,z))=f(f(x,y),z)\mid f\in \Sigma_{\mathcal{A}}\}$$ where
$\Sigma_{\mathcal{A}}\subset \Sigma$ such that $\Sigma_{\mathcal{A}}$ is finite, possibly empty, and for all
$f\in \Sigma_{\mathcal{A}}$, $\mathit{arity}(f)=2$.  When possible, without causing confusion, we
will use \textit{flattened} notation for function symbols in
$\Sigma_{\mathcal{A}}$, i.e. we write $f(t_1,\ldots,t_n)$ dropping intermediate occurences of $f$.

A \emph{$n$-second-order ground $\mathcal{A}$-unification problem}, abbreviated
$n$-ASOGU problem, is a triple $(\mathcal{U},\mathcal{A},F)$ where
$(\mathcal{U},F)$ is a $n$-second-order ground unification problem and
$\mathcal{A}$ is a non-empty set of associativity axioms. A substitution
$\sigma$ is a \emph{unifier} of an ASOGU problem
$(\mathcal{U},\mathcal{A},F)$, if $\mathit{dom}_f(\sigma)= \{F\}$,
$\mathit{dom}_i(\sigma)=\emptyset$, and for all $u\unif_F v\in \mathcal{U}$, $u\sigma\approx_{\mathcal{A}}v\sigma$.

In later sections, we will use the following theorem due to Matiyasevich,
Robinson, Davis, and Putnam:
\begin{theorem}[ Matiyasevich–Robinson–Davis–Putnam theorem or Hilbert's
  10$^{th}$ problem~\cite{10.5555/164759}]\label{thm:hilbert-10}
  Let $p(\overline{x})$ be a polynomial with integer coefficients. Then whether
  $p(\overline{x}) = 0$ has an integer solution is undecidable.
\end{theorem}
Importantly, \prettyref{thm:hilbert-10} also holds if we restrict the solutions
to non-negative integers (natural numbers with 0).


\section{n-Multipliers and n-Counters}\label{sec:mulcnt}
In this section, we define and discuss the $n$-multiplier and $n$-counter
functions, which allow us to encode number-theoretic problems in second-order
unification. These concepts are related to the \emph{multiplicity operator} used for finiteness results concerning the lambda calculus~\cite{DBLP:journals/jsyml/Vrijer85,DBLP:conf/csl/AccattoliK10}. The results in this section hold for the associative variant of the
unification problem, but we use the syntactic variant below to simplify the
presentation. The $n$-multiplier and $n$-counter are motivated by the following
simple observation about SOGU.
\begin{observation}
  \label{lem:allequal}
  Let $(\mathcal{U},F)$ be a unifiable ASOGU problem, and $\sigma$ a unifier of
  $(\mathcal{U},F)$. Then for all $f\in \Sigma$ and $u\unif_F v\in \mathcal{U}$,
  $\mathit{occ}_{\Sigma}(f,u\sigma) = \mathit{occ}_{\Sigma}(f,v\sigma)$.
\end{observation}
\begin{proof}
Follows from the definition of unifier over an associative theory.
\end{proof}
With this observation, we now seek to relate the number of occurrences of a
symbol $f$ in a term $t$ and a substitution $\sigma$ with the number of occurrences
of $f$ in the term $t\sigma$. The  $n$-\emph{multiplier} counts the multiplicative effect of nested variables, while the  $n$-\emph{counter} counts how the multiplicative effect of nested variables affects the multiplicity of a particular symbol occurring in $t$. 

\begin{definition}[$n$-Mutiplier]
  \label{def:nmulti}
  Let $t$ be an SOG term such that $\Varsf(t)=\{F\}$,
  $F\in \Varsf^n$, and $h_1,\ldots,h_n\geq 0$ are non-negative integers. Then we define the $n$-multiplier for $t$ at $\overline{h_n}$, denoted 
  $\mulOp{F,\overline{h_n},t}$, recursively as follows:
  \begin{itemize}
  \item  $\mulOp{F,\overline{h_n},a} = 0$.\vspace*{.5em}

  \item $\mulOp{F,\overline{h_n}, f(\overline{t_n}))} = \sum_{j=1}^{l}
     \mulOp{F,\overline{h_n},t_j}$\vspace*{.5em}
  \item  $\mulOp{F,\overline{h_n},F(\overline{t_n})} = 1+\sum_{i=1}^{n}
    h_i\cdot\mulOp{F,\overline{h_n},t_i}$\vspace*{.5em}
  \end{itemize}
  \noindent Furthermore, let $(\mathcal{U},F)$ be an SOGU problem. Then \vspace*{.5em}
  \[
    \mullOp{F,\overline{h_n},\mathcal{U}} = \sum_{u\unif_F v\in \mathcal{U}}
    \mulOp{F,\overline{h_n},u} \hspace{2em} \mulrOp{F,\overline{h_n},\mathcal{U}} =
    \sum_{u\unif_F v\in \mathcal{U}} \mulOp{F,\overline{h_n},v}.
  \]
\end{definition}
The \textit{$n$-multiplier} captures the following property of a term: let $t$
be an SOG term such that $\Varsf(t) = \{F\}$, $\mathit{arity}(F)=n$,
$f\in \Sigma$, and $\sigma =\{F\mapsto \lambda \overline{x_n}.s\}$ a substitution where
\begin{itemize}
\item $\mathit{occ}_{\Sigma}(f,s)\geq 1$ and $\mathit{occ}_{\Sigma}(f,t)=0$,
\item $\Varsi(s)\subseteq\{\overline{x_n}\}$, and
\item for all $1\leq i\leq n$, $\mathit{occ}(x_i,s)=h_i$.
\end{itemize}
Then
$\mathit{occ}_{\Sigma}(f,t\sigma)= \mathit{occ}_{\Sigma}(f,s)\cdot
\mulOp{F,\overline{h_n},t}$. Observe that the $\overline{h_n}$ captures
duplication of the arguments to $F$ within the term $t$. Also, it is not a
requirement that $\mathit{occ}_{\Sigma}(f,t)=0$, but making this assumption
simplifies the relationship between $t$ and $t\sigma$ for illustrative
purposes. See the following for a concrete example:

\begin{example}\label{ex:n-mult}
  \begin{figure}
    \centering
    \begin{tikzpicture}[ level 1/.style={level distance = .75cm, sibling
        distance = 3cm}, level 2/.style={level distance = .75cm, sibling
        distance = 1cm}, level 3/.style={level distance = .75cm, sibling
        distance = 1cm}, ]
      \node{g} child {node {\textbf{F}} child {node {g} child {node {a}} child
          {node {\textbf{F}} child {node {s} child {node {a}} } } } } child
      {node {g} child {node {\textbf{F}} child{node {a}} } child {node
          {\textbf{F}} child {node {\textbf{F}} child {node {\textbf{F}}
              child{node {b}} } } } }; 
              
     \draw[dotted] (-3,-.5) rectangle (-.25,-4.5); 
      
      \draw[dotted] (-1.25,-2) rectangle (-.75,-4); 

      \draw[dotted,red, thick]
      (.3,-.5) rectangle (3,-4.5);

      \draw[dotted] (.75,-1.25) rectangle (1.25,-2.5); 
      
      \draw[dotted] (1.75,-1.25) rectangle (2.25,-4);

      \draw[dotted] (-3.5,.25) rectangle (3.75,-5);

      \draw (2,-4.25) node {$1+h_1+h_1^2$}; 
      \draw (1,-2.75) node {1}; 
      \draw (1.5,-4.75) node {$2+h_1+h_1^2$}; 

      \draw (-1,-4.3) node {1};
       
      \draw (-1.65,-4.75) node {$1+h_1$};
      
      \draw (0,-5.25) node
      {$ \mulOp{F,h_1,t} =3+2\cdot h_1+h_1^2$};
    \end{tikzpicture}
    \caption{Computation of $\mulOp{F,h_1,t}$ (See
      Example~\ref{ex:n-mult}).}
    \label{fig:ex1}
  \end{figure}
  Consider the term
  \[
    t=g\bbbbigpl F\bbigpl g\bigpl a,F(s(a))\bigpr\bbigpr,g\bbbigpl F(a),F(F(F(b)))\bbbigpr \bbbbigpr
  \] Then the
  $n$-multiplier of $t$ is $3+2\cdot h_1 + h_1^2$ and is derived as follows:
  \begin{align*}
    \mulOp{F,h_1,t} \ \textbf{=}\ & \mulOp{F,h_1,F\bbigpl g\bigpl a,F(s(a))\bigpr\bbigpr}+ \\ &  \mulOp{F,h_1,{\color{red} g\bbbigpl F(a),F(F(F(b)))\bbbigpr}} \vspace*{.5em} \\
    \mulOp{F,h_1,F\bbigpl g\bigpl a,F(s(a))\bigpr\bbigpr}  \ \textbf{=} \ &  1+ h_1\cdot \mulOp{F,h_1,g(a,F(s(a)))} \vspace*{.5em} \\
    \mulOp{F,h_1,{\color{red} g\bbbigpl F(a),F(F(F(b)))\bbbigpr}} \ \textbf{=} \ & \mulOp{F,h_1,F(a)} + \mulOp{F,h_1,F(F(F(b)))} \vspace*{.5em} \\
    \mulOp{F,h_1,g(a,F(s(a)))}  \ \textbf{=} \ & 1 \vspace*{.5em}\\
    \mulOp{F,h_1,F(a)} \ \textbf{=} \ & 1 \vspace*{.5em}\\
    \mulOp{F,h_1,F(F(F(b)))} \ \textbf{=} \ & 1+h_1\cdot\mulOp{F,h_1,F(F(b))}\vspace*{.5em}\\
    \mulOp{F,h_1,F(F(b)))} \ \textbf{=} \  & 1+h_1 \vspace*{.5em}  
  \end{align*}
  Thus, when $h_1=2$ we get $\mulOp{F,h_1,t} = 11$. Observe
  $\mathit{occ}_{\Sigma}(g',t\{F\mapsto \lambda x.g'(x,x)\}) = 11$, i.e.
  \begin{align*}
    F(g(a,F(s(a))))\{F\mapsto \lambda x.g'(x,x)\} \ \textbf{=} 
    \ & \mathbf{g'}(t',t') \vspace*{.5em} \\
    t' \ \textbf{=} \ & g(a,\mathbf{g'}(s(a),s(a))) \vspace*{.5em} \\
    F(a)\{F\mapsto \lambda x.g'(x,x)\}\ \textbf{=}\ & \mathbf{g'}(a,a)\vspace*{.5em} \\
    F(F(F(b)))\{F\mapsto \lambda x.g'(x,x)\} \ \textbf{=} \ & \mathbf{g'}(t'',t'')\vspace*{.5em} \\
    t''\ \textbf{=}\ & \mathbf{g'}(\mathbf{g'}(b,b),\mathbf{g'}(b,b))\vspace*{.5em}
  \end{align*}
  Given that $\mathit{occ}_{\Sigma}(g',t) = 0$, all occurrences of $g'$ are
  introduced by $\sigma$. See Figure~\ref{fig:ex1} for a tree representation of
  the computation.

\end{example}

Next, we introduce the $n$-counter function. Informally, given an SOG term
$t$ such that $\Varsf(t)=\{F\}$, a symbol $g\in \Sigma$, and a
substitution $\sigma$ with $\mathit{dom}_f(\sigma)=\{F\}$, the $n$-counter
captures the number of occurrences of $g$ in $t\sigma$. The substitution can change the occurrences of 
$g$ in two ways: (i) by introducing additional occurrences of $g$ and (ii) by duplicating occurrences of $g$ that occur as arguments to $F$. The  $n$-counter function captures both of these changes. In particular, we are interested in occurrences of the latter type as these are exploited by the reduction presented in Section~\ref{app:undecidable}. 
\begin{definition}[$n$-Counter]
  \label{def:ncounter}
  Let $g\in \Sigma$, $t$ be an SOG term such that $\Varsf(t)=\{F\}$
  and $F\in \Varsf^n$, and $h_1,\ldots,h_n\geq 0$ are non-negative integers. Then we define the $n$-counter of $g$ for 
  $t$ at $\overline{h_n}$, denoted
  $\cntOp{F,\overline{h_n},g,t}$, recursively as follows:
  \begin{itemize}
  \item $\cntOp{F,\overline{h_n},a,b} = 0$ \vspace*{.5em}
  \item $\cntOp{F,\overline{h_n},a,a} = 1$\vspace*{.5em}
  \item $\cntOp{F,\overline{h_n},f, g(\overline{t_l})} = \sum_{j=1}^{l}
    \cntOp{F,\overline{h_n},f,t_j}$\vspace*{.5em}
  \item $\cntOp{F,\overline{h_n},f,f(\overline{t_l})} = 1+ \sum_{j=1}^{l}
    \cntOp{F,\overline{h_n},f,t_j}$\vspace*{.5em}
  \item $\cntOp{F,\overline{h_n},f,F(\overline{t_n})} = \sum_{i=1}^{n}
    h_i\cdot\cntOp{F,\overline{h_n},f,t_i}$\vspace*{.5em}
  \end{itemize}
  \noindent Furthermore, let $(\mathcal{U},F)$ be an SOGU problem. Then\vspace*{.5em}
  \[
    \cntlOp{F,\overline{h_n},g,\mathcal{U}} = \sum_{u\unif_F v\in
      \mathcal{U}} \cntOp{F,\overline{h_n},g,u} \hspace{3em}
    \cntrOp{F,\overline{h_n},g,\mathcal{U}} = \sum_{u\unif_F v\in
      \mathcal{U}} \cntOp{F,\overline{h_n},g,v}.
  \]
\end{definition}

The \textit{$n$-counter} captures the following property of a term: let $t$ be an SOG term such that $\Varsf(t) = \{F\}$, $\mathit{arity}(F)=n$,
$f\in \Sigma$, and $\sigma =\{F\mapsto \lambda \overline{x_n}.s\}$ a
substitution where
\begin{itemize}
\item $\mathit{occ}_{\Sigma}(f,s)= 0$,
\item $\Varsi(s)\subseteq\{\overline{x_n}\}$, and
\item for all $1\leq i\leq n$, $\mathit{occ}_{\Sigma}(x_i,s)=h_i$.
\end{itemize}
Then $\mathit{occ}_{\Sigma}(f,t\sigma)=
\cntOp{F,\overline{h_n},f,t}$. Observe that the $\overline{h_n}$ captures
duplication of the arguments to $F$ within the term $t$. Also, it is not a
requirement that $\mathit{occ}_{\Sigma}(f,s)=0$, but making this assumption
simplifies the relationship between $t$ and $t\sigma$ for illustrative
purposes. See the following for a concrete example:

\begin{example}\label{ex:n-counter}
  \begin{figure}
    \centering
    \begin{tikzpicture}[ level 1/.style={level distance = .75cm, sibling
        distance = 3cm}, level 2/.style={level distance = .75cm, sibling
        distance = 3cm}, level 3/.style={level distance = .75cm, sibling
        distance = 1cm}, ]
      \node{g} child { node{a} } child {
        node{g} child {node {\textbf{F}} child {node {g} child {node {a}} child
            {node {\textbf{F}} child {node {g} child {node {a}} child {node {a}}
              } } } } child {node {g} child {node {\textbf{F}} child{node {a}} }
          child {node {\textbf{F}} child {node {\textbf{F}} child {node
                {\textbf{F}} child{node {b}} } } } } } ; 
                
      \draw[dotted] (-1.75,-.5) rectangle (-1.25,-1); 
      \draw (-1.5,-1.25) node {$1$};

      \draw[dotted] (-.25,-2.75) rectangle (1.25,-4.75); 
      \draw (.5,-5) node      {$2\cdot h_1$};

      \draw[dotted] (2.25,-2) rectangle (2.75,-3.25); 
      \draw (2.5,-3.5) node      {$h_1$};

      \draw[dotted] (3.25,-2) rectangle (3.75,-4.75); \draw (3.5,-5) node {$0$};

      \draw[dotted] (-1,-.5) rectangle (4.65,-5.75); 
      \draw (2,-6) node {$ 2\cdot h_1+2\cdot h_1^2$};

      \draw[dotted] (-.75,-1.25) rectangle (1.5,-5.25); \draw (.5,-5.5) node
      {$h_1+2\cdot h_1^2$};

      \draw[dotted,red,thick] (1.9,-1.25) rectangle (4.15,-5.25); 
      \draw (3,-5.5) node {$h_1$};

      \draw[dotted] (-2,.25) rectangle (5,-6.25); 
      
      \draw (1.5,-6.5) node {$ \cntOp{F,h_1,a,t} =1+2\cdot h_1+2\cdot h_1^2$};
    \end{tikzpicture}
    \caption{Computation of $\cntOp{F,h_1,a,t}$ (See
      Example~\ref{ex:n-counter}).}
    \label{fig:ex2}
  \end{figure}

  Consider the term $t=$
  \[
    g\bbbbigpl a,g\bbbigpl F\bbigpl g\bigpl a,F(g(a,a))\bigpr \bbigpr,g\bbigpl F(a),F(F(F(b)))\bbigpr\bbbigpr\bbbbigpr.
  \]
  The counter of $t$ over the symbol $a$ is $1+ 2\cdot h_1 + 2\cdot h_1^2$ and is derived as follows:
  \begin{align*}  
  \cntOp{F,h_1,a,t}\ \textbf{=}\ &
  \cntOp{F,h_1,a,a}+ \\ &
 \cntOp{F,h_1,a,F\bbigpl g\bigpl a,F(g(a,a))\bigpr \bbigpr} +\\
 &\cntOp{F,h_1,a,g\bbigpl F(a),F(F(F(b)))\bbigpr} \vspace*{.5em} \\
  \cntOp{F,h_1,a,a}\ \textbf{=}\ & 1\vspace*{.5em} \\
  \cntOp{F,h_1,a,F\bbigpl g\bigpl a,F(g(a,a))\bigpr \bbigpr} \ \textbf{=}\ &
 h_1\cdot \cntOp{F,h_1,a,{\color{red} g\bbigpl F(a),F(F(F(b)))\bbigpr}}\vspace*{.5em} \\
   \cntOp{F,h_1,a,{\color{red}g(F(a),F(F(F(b))))}}  \ \textbf{=}\ &
   \cntOp{F,h_1,a,F(a)} +  \cntOp{F,h_1,a,F(F(F(b)))} \vspace*{.5em} \\
  \cntOp{F,h_1,a,g(a,F(g(a,a)))}\ \textbf{=}\ & 1+ 2\cdot h_1\vspace*{.5em} \\
  \cntOp{F,h_1,a,F(a)}\ \textbf{=}\ & h_1 \vspace*{.5em}\\
  \cntOp{F,h_1,a,F(F(F(b)))}\ \textbf{=}\ & 0 \vspace*{.5em}
  \end{align*}

  \noindent Thus, when $h_1=2$ we get $\cntOp{F,h_1,a,t} = 13$ (the sum of the three main components as illustrated above). Observe
  $\mathit{occ}_{\Sigma}(a,t\{F\mapsto \lambda x.g(x,x)\}) = 13$. Applying the
  substitution $\sigma =\{F\mapsto \lambda x.g(x,x)\}$ to $t$ results in the
  following
  \begin{align*}
    F(g(a,F(g(a,a))))\sigma \ \textbf{=}\ & g(t',t')\vspace*{.5em} \\
    t' \ \textbf{=}\ & g(\textbf{a},g(g(\textbf{a},\textbf{a}),g(\textbf{a},\textbf{a})))\vspace*{.5em} \\
    F(a)\sigma\ \textbf{=}\ & g(\textbf{a},\textbf{a})\vspace*{.5em} \\
    F(F(F(b)))\sigma \ \textbf{=}\ & g(t'',t'')\vspace*{.5em} \\
    t'' \ \textbf{=}\ & g(g(b,b),g(b,b))\vspace*{.5em}
  \end{align*}
  See Figure~\ref{fig:ex2} for a tree representation of the computation.
\end{example}

The $n$-multiplier and $n$-counter operators count the occurrences of symbols in
$t\sigma$ by only considering $t$. Observe that $h_1,\ldots, h_n$ denote the multiplicity
of the arguments of the function variable $F$ within a substitution $\sigma$ with
domain $F$. For any symbol $c$ occurring in $t$, the $n$-counter predicts the
occurences of $c$ in $t\sigma$ by only considering $t$ and $h_1,\ldots, h_n$.  For any
symbol $c$ occurring in the range of a substitution $\sigma$, the $n$-multiplier
predicts the occurences of $c$ in $t\sigma$ by only considering $t$ and
$h_1,\ldots, h_n$. We now describe the relationship between the $n$-multiplier,
$n$-counter, and the total occurrences of a given symbol within the term $t\sigma$.

\begin{lemma}
  \label{lem:FixingConstants}
  Let $g\in \Sigma$, $n\geq 0$, $t$ be an SOG term such that
  $\Varsf(t)=\{F\}$, $h_1,\ldots,h_n\geq 0$ are non-negative integers, and
  $\sigma = \{F\mapsto \lambda\overline{x_n}.s\}$ a substitution such that
  $\Varsi(s)\subseteq\{\overline{x_n}\}$ and for all $1\leq i\leq n$
  $\mathit{occ}_{\Sigma}(x_i,s) = h_i$. Then
  $$\mathit{occ}_{\Sigma}(g,t\sigma) =
  \mathit{occ}_{\Sigma}(g,s)\cdot\mulOp{F,\overline{h_n},t}+\cntOp{F,\overline{h_n},g,t}.$$
\end{lemma}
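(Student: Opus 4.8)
The plan is to prove the identity by structural induction on the $n$-SOG term $t$, reading $\mathit{occ}(c,\cdot)$ as the count of positions whose head is the symbol $c$ (equivalently $\mathit{occ}_{\Sigma}(c,\cdot)$, as used in Lemma~\ref{lem:allequal}; when $c$ is a constant the two readings coincide). The point is that the recurrences defining $\mathit{mul}$ and $\mathit{cnt}$ are driven by the very same case analysis on the head of $t$, so in each case the induction hypothesis plugs in directly and the result follows by rearranging sums and factoring out $\mathit{occ}(c,s)$.

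The base and function-symbol cases are routine. If $t$ is a constant $d$, then $t\sigma=d$; when $d\neq c$ both sides are $0$, and when $d=c$ both sides are $1$ (since $\mathit{mul}(F,\overline{h_n},c)=0$ and $\mathit{cnt}(F,\overline{h_n},c,c)=1$). If $t=f(t_1,\dots,t_l)$ with $f\neq F$, then applying $\sigma$ commutes with the head, so $t\sigma=f(t_1\sigma,\dots,t_l\sigma)$ and $\mathit{occ}(c,t\sigma)$ equals $\sum_{j}\mathit{occ}(c,t_j\sigma)$ plus one extra unit exactly when $f=c$ (the monadic subcase $t=c(t_1)$). As $\mathit{mul}$ sums over the $t_j$ while $\mathit{cnt}$ sums over the $t_j$ and carries the same extra unit, substituting the induction hypothesis for each $t_j\sigma$ and distributing $\mathit{occ}(c,s)$ over the sum gives the claim.

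The crux is the variable case $t=F(t_1,\dots,t_n)$, where $t\sigma$ is the $\beta$-normal form $s[x_1:=t_1\sigma,\dots,x_n:=t_n\sigma]$. Here I would argue that each $c$-headed position of $t\sigma$ arises in exactly one of two disjoint and exhaustive ways. Either it is a $c$-headed position already present in the skeleton $s$; these survive intact because the substitution only rewrites leaf positions labelled by some $x_i$ and $c\neq x_i$, and they contribute $\mathit{occ}(c,s)$ in total. Or it lies inside a copy of some $t_i\sigma$ that replaced an occurrence of $x_i$; since $\mathit{occ}(x_i,s)=h_i$ there are exactly $h_i$ such copies, contributing $h_i\cdot\mathit{occ}(c,t_i\sigma)$. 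Hence $\mathit{occ}(c,t\sigma)=\mathit{occ}(c,s)+\sum_{i=1}^n h_i\cdot\mathit{occ}(c,t_i\sigma)$. Applying the induction hypothesis to each $t_i\sigma$ and regrouping, the coefficient of $\mathit{occ}(c,s)$ becomes $1+\sum_{i=1}^n h_i\cdot\mathit{mul}(F,\overline{h_n},t_i)$, which is exactly $\mathit{mul}(F,\overline{h_n},t)$, while the remainder is $\sum_{i=1}^n h_i\cdot\mathit{cnt}(F,\overline{h_n},c,t_i)=\mathit{cnt}(F,\overline{h_n},c,t)$, yielding the desired equation.

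I expect the only genuine obstacle to be the bookkeeping in this last case: one must justify that the native $c$-occurrences of $s$ are counted exactly once (this is what the leading $1$ in the $\mathit{mul}$ recurrence records), that each argument's $c$-occurrences are replicated precisely $h_i$ times, and that no further $c$-headed position is created at a splice point. The hypotheses $\Varsi(s)\subseteq\{\overline{x_n}\}$ and that $s$ is first-order (so $F\notin\Varsf(s)$) are exactly what make this split legitimate: the substitution is a clean, capture-free replacement of leaves, with no residual occurrence of $F$ left to unfold.
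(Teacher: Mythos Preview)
Your proposal is correct and follows essentially the same approach as the paper: induction on the structure (equivalently, depth) of $t$, with the same three-way case split (constant, head a signature symbol, head $F$) and the same key identity $\mathit{occ}(c,t\sigma)=\mathit{occ}(c,s)+\sum_{i} h_i\cdot\mathit{occ}(c,t_i\sigma)$ in the $F$-case, followed by applying the induction hypothesis and regrouping. If anything, you are more explicit than the paper in justifying why that decomposition holds (disjointness of native $c$-positions in $s$ versus those inside the $h_i$ copies of each $t_i\sigma$, and that no new $c$-heads arise at splice points), which is a welcome clarification rather than a departure.
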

\begin{proof}
  We prove the lemma by induction on $\dep{t}$.  \proofcase{Base case:} When
  $\dep{t}=1$, then either (i) $t$ is a constant or (ii) $t=F$ and
  $\mathit{arity}(F)=0$.
    \begin{enumerate}[label=(\roman*)]
  \item Observe that $t=t\sigma$ and $\mulOp{F,\overline{h_n},t}= 0$. If
    $t=g$ then $\cntOp{F,\overline{h_n},g,t} = 1$, otherwise
    $\cntOp{F,\overline{h_n},g,t} = 0$. In either case, we get
    $\mathit{occ}(g,t\sigma)_{\Sigma} = 0+\cntOp{F,\overline{h_n},g,t}$.
  \item Observe that $t\sigma = s$, $\mulOp{F,t}= 1$, and
    $\cntOp{F,g,t} = 0$. Thus,
    $\mathit{occ}(g,t\sigma)_{\Sigma} = \mathit{occ}_{\Sigma}(g,s)\cdot\mulOp{F,t}$ which reduces
    to $\mathit{occ}_{\Sigma}(g,s)=\mathit{occ}_{\Sigma}(g,s) $.
  \end{enumerate}
  Hence, in either case, we obtain the desired result.%
  \proofcase{Step case:} Now, for the induction hypothesis, we assume the lemma
  holds for all terms $t$ such that $\dep{t}<w+1$ and prove the lemma for a term
  $t'$ such that $\dep{t}=w+1$. Consider the following three cases:
  \begin{itemize}
  \item $t=f(t_1,\ldots,t_k)$ and $f = g$. We know by the induction hypothesis that
    for $1\leq i\leq k$,
    $\mathit{occ}_{\Sigma}(g,t_i\sigma) =
    \mathit{occ}_{\Sigma}(g,s)\cdot\mulOp{F,\overline{h_n},t_i}+\cntOp{F,\overline{h_n},g,t_i}.$
    Thus,
    \begin{align*}
      \mathit{occ}_{\Sigma}(g,t\sigma)\ \textbf{=}   1+\sum_{i=1}^{k}
      \mathit{occ}_{\Sigma}(g,t_i\sigma)\ \textbf{=} &\\ 1+ \mathit{occ}_{\Sigma}(g,s)\cdot\sum_{i=1}^{k} \mulOp{F,\overline{h_n},t_i} + \sum_{i=1}^k \cntOp{F,\overline{h_n},g,t_i}\ \textbf{=} &\\
      \mathit{occ}_{\Sigma}(g,s)\cdot \mulOp{F,\overline{h_n},t} +  \cntOp{F,\overline{h_n},g,t}
    \end{align*}
    where, by the definition of the $n$-multiplier and the $n$-counter,
    $\mulOp{F,\overline{h_n},t} =$\\
    $ \sum_{i=1}^{k} \mulOp{F,\overline{h_n},t_i},$ and
    $\cntOp{F,\overline{h_n},g,t} = 1+\sum_{i=1}^k \cntOp{F,\overline{h_n},g,t_i}.$

  \item $t=f(t_1,\ldots,t_k)$ and $f\not = g$: This case follows from the previous
    case except, we do not count $f$ when counting occurrences of $g$ and
    $\cntOp{F,\overline{h_n},g,t} = \sum_{i=1}^k \cntOp{F,\overline{h_n},g,t_i}.$
  \item $t=F(r_1,\ldots,r_n)$. By the induction hypothesis, we have that for all
    $1\leq i\leq n$,
    \begin{align*}
      &\mathit{occ}_{\Sigma}(g,r_i\sigma)\ \textbf{=}\  \mathit{occ}_{\Sigma}(g,s)\cdot\mulOp{F,\overline{h_n},r_i}+\cntOp{F,\overline{h_n},g,r_i}.
    \end{align*}
    We can derive the following equality and conclude the proof using the above
    assumption.
    \begin{align*}
      \mathit{occ}_{\Sigma}(g,t\sigma)\ \textbf{=}\ \mathit{occ}_{\Sigma}(g,s)+\sum_{i=1}^{n} h_i\cdot\mathit{occ}_{\Sigma}(g,r_i\sigma)\ \textbf{=} \  &  \\ \vspace{1em}
      \mathit{occ}_{\Sigma}(g,s)+\mathit{occ}_{\Sigma}(g,s)\cdot \left(\sum_{i=1}^{n} h_i\cdot \mulOp{F,\overline{h_n},r_i}\right)+ 
      \left(\sum_{i=1}^{n} h_i\cdot\cntOp{F,\overline{h_n},g,r_i}\right)\ \textbf{=}  & \\ \vspace{1em}
      \mathit{occ}_{\Sigma}(g,s)\cdot \left(1+ \sum_{i=1}^{n} h_i\cdot \mulOp{F,\overline{h_n},r_i}\right)+ \left(\sum_{i=1}^{n} h_i\cdot\cntOp{F,\overline{h_n},g,r_i}\right)\ \textbf{=} & \\ \vspace{1em} 
        \mathit{occ}_{\Sigma}(g,s) \cdot \mulOp{F,\overline{h_n},F(\overline{r_n})}+  \cntOp{F,\overline{h_n},g,F(\overline{r_n})}\ \textbf{=} & \\  \vspace*{1em} \mathit{occ}_{\Sigma}(g,s) \cdot \mulOp{F,\overline{h_n},t}+ \cntOp{F,\overline{h_n},g,t} \hspace*{1.2em} & \vspace{.5em} \tag*{\qedhere}
    \end{align*}
  \end{itemize}
\end{proof}
When we use Lemma~\ref{lem:FixingConstants} in Section~\ref{app:undecidable}, the term $s$ in $\sigma = \{F\mapsto \lambda\overline{x_n}.s\}$ we will not contain any occurences of $g$, i.e. $\mathit{occ}_{\Sigma}(g,s)=0$. Furthermore, Lemma~\ref{lem:FixingConstants} captures an essential property of the $n$-multiplier and $n$-counter,
which we illustrate in the following example.
\begin{example}
  Consider the term $t=$
\[
    g\bbbbigpl a,g\bbbigpl F\bbigpl g\bigpl a,F(g(a,a))\bigpr \bbigpr,g\bbigpl F(a),F(F(F(b)))\bbigpr\bbbigpr\bbbbigpr.
  \]
  and the substitution $\{F\mapsto \lambda x.g(a,g(x,x))\}$. The $n$-counter of $a$ for 
  $t$ at 2 is $\cntOp{F,2,a,t} = 13$ and the $n$-multiplier for $t$ at 2 is
  $\mulOp{F,2,t} = 11$. Observe
  $\mathit{occ}_{\Sigma}(a,t\{F\mapsto \lambda x.g(a,\allowbreak g(x,x))\}) = 24$ and
  $\mathit{occ}_{\Sigma}(a,s)\cdot\mulOp{F,2,t}+\cntOp{F,2,a,t} = 24$. Applying the
  substitution $\{F\mapsto \lambda x.g(a,g(x,x))\}$ to $t$ results in the following:
  \begin{align*}
    F(g(a,F(g(a,a))))\{F\mapsto \lambda x.g(a,g(x,x))\}\ \textbf{=}\ & g(\textbf{a},g(t',t'))\\
    t'\ \textbf{=}\ & g(\textbf{a},g(\textbf{a},g(g(\textbf{a},\textbf{a}),g(\textbf{a},\textbf{a}))))\\
    F(a)\{F\mapsto \lambda x.g(a,g(x,x))\} \ \textbf{=}\ & g(\textbf{a},g(\textbf{a},\textbf{a}))\\
    F(F(F(b)))\{F\mapsto \lambda x.g(a,g(x,x))\} \ \textbf{=}\ & g(\textbf{a},g(t'',t''))\\
    t'' \ \textbf{=}\ & g(\textbf{a},g(g(\textbf{a},g(b,b)),g(\textbf{a},g(b,b))))
  \end{align*}
\end{example}
So far, we have only considered arbitrary terms and substitutions. We now apply the above results to unification problems and their solutions. In particular, a
corollary of Lemma~\ref{lem:FixingConstants} is that there is a direct relation
between the $n$-multiplier and $n$-counter of an unifiable unification
problem. The following lemma describes this relation.
\begin{lemma}[Unification Condition]\label{lem:MainEq}
  Let $(\mathcal{U},F)$ be an unifiable SOGU problem such that
  $\Varsf(\mathcal{U}) = \{F\}$, $h_1,\ldots,h_n\geq 0$ are non-negative integers, and
  $\sigma =\{F\mapsto \lambda \overline{x_n}.s\}$ an unifier of
  $(\mathcal{U},F)$ such that $\Varsi(s)\subseteq \{\overline{x_n}\}$ and for all
  $1\leq i\leq n$,$\mathit{occ}_{\Sigma}(x_i,s)=h_i$. Then for all $g\in \Sigma$,\vspace{.5em}
  \begin{equation}\label{eq:perunifeqCom}
    \begin{array}{l}
      \mathit{occ}_{\Sigma}(g,s)\cdot (\mullOp{F,\overline{h_n},\mathcal{U}}-\mulrOp{F,\overline{h_n},\mathcal{U}})\ \textbf{=} \ \cntrOp{F,\overline{h_n},g,\mathcal{U}} -\cntlOp{F,\overline{h_n},g,\mathcal{U}}.\end{array}
  \end{equation}
\end{lemma}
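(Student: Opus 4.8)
The plan is to establish the identity one equation at a time and then sum over $\mathcal{U}$, using Lemma~\ref{lem:FixingConstants} to translate occurrence counts into the $n$-multiplier and $n$-counter, and the unifier hypothesis to equate the two sides of each equation.

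First I would fix $c\in\Sigma^{\leq 1}$ and an arbitrary equation $u\unif_F v\in\mathcal{U}$, and check that Lemma~\ref{lem:FixingConstants} applies to both $u$ and $v$: by the definition of an $n$-SOGU problem, $u$ and $v$ are $n$-SOG terms with $\Varsf(u),\Varsf(v)\subseteq\{F\}$, and by hypothesis $\sigma=\{F\mapsto\lambda\overline{x_n}.s\}$ satisfies $\Varsi(s)=\{\overline{x_n}\}$ and $\mathit{occ}(x_i,s)=h_i$ for each $i$. Applying the lemma to each side gives
\[
\mathit{occ}(c,u\sigma)=\mathit{occ}(c,s)\cdot\mathit{mul}(F,\overline{h_n},u)+\mathit{cnt}(F,\overline{h_n},c,u)
\]
together with the analogous identity for $v$.

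Next, since $\sigma$ is a unifier we have $u\sigma=v\sigma$, hence $\mathit{occ}(c,u\sigma)=\mathit{occ}(c,v\sigma)$ (this is exactly the content of Lemma~\ref{lem:allequal}). Substituting the two expansions into this equality and rearranging, each equation of $\mathcal{U}$ contributes
\[
\mathit{occ}(c,s)\cdot\bigl(\mathit{mul}(F,\overline{h_n},u)-\mathit{mul}(F,\overline{h_n},v)\bigr)=\mathit{cnt}(F,\overline{h_n},c,v)-\mathit{cnt}(F,\overline{h_n},c,u).
\]
Summing this over all $u\unif_F v\in\mathcal{U}$ and collapsing the resulting sums into the definitions of $\mathit{mul}_l,\mathit{mul}_r,\mathit{cnt}_l,\mathit{cnt}_r$ then produces exactly Equation~(\ref{eq:perunifeqCom}).

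I do not expect a substantive obstacle, since the statement is essentially a corollary of Lemma~\ref{lem:FixingConstants}: the only work is the per-equation rearrangement together with the linearity of the summation. The single point deserving care is the implicit identification, for $c\in\Sigma^{\leq 1}$, of the count $\mathit{occ}(c,\cdot)$ appearing in Lemma~\ref{lem:FixingConstants} with the head-symbol count $\mathit{occ}_{\Sigma}(c,\cdot)$ of Lemma~\ref{lem:allequal}; for a constant these coincide trivially, and for a monadic symbol the $n$-counter is already defined (through the clause $t=c(t_1)$) to track head occurrences, so both lemmas measure the same quantity and may be chained without difficulty.
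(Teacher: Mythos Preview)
Your proposal is correct and follows essentially the same approach as the paper: apply Lemma~\ref{lem:FixingConstants} to each side of every equation, use Lemma~\ref{lem:allequal} (i.e.\ the unifier hypothesis) to equate the resulting expressions, rearrange to the per-equation identity, and sum over $\mathcal{U}$. Your extra remark about the identification of $\mathit{occ}(c,\cdot)$ with $\mathit{occ}_{\Sigma}(c,\cdot)$ for $c\in\Sigma^{\leq 1}$ is a nice bit of due diligence that the paper leaves implicit.
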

\begin{proof}
  By Lemma~\ref{lem:allequal}, for any $g\in \Sigma$ and
  $u\unif_F v\in \mathcal{U}$, we have
  $\mathit{occ}_{\Sigma}(g,u\sigma) = \mathit{occ}_{\Sigma}(g,v\sigma)$ and by
  Lemma~\ref{lem:FixingConstants} we also have
  \begin{gather*}
    \mathit{occ}_{\Sigma}(g,u\sigma) =
    \mathit{occ}_{\Sigma}(g,s)\cdot\mulOp{F,\overline{h_n},u}+\cntOp{F,\overline{h_n},g,u},\\
    \text{ and}\\
    \mathit{occ}_{\Sigma}(g,v\sigma) = \mathit{occ}_{\Sigma}(g,s)\cdot\mulOp{F,\overline{h_n},v}+\cntOp{F,\overline{h_n},g,v}.
  \end{gather*}
  Thus, for any $g\in \Sigma$ and $u\unif_F v\in \mathcal{U}$,
  \begin{align*}
    \mathit{occ}_{\Sigma}(g,s)\cdot\mulOp{F,\overline{h_n},u}+\cntOp{F,\overline{h_n},g,u}    \ \textbf{=}\ 
    \mathit{occ}_{\Sigma}(g,s)\cdot\mulOp{F,\overline{h_n},v}+\cntOp{F,\overline{h_n},g,v}.
  \end{align*}
  From this equation, we can derive the following:
  \begin{equation} \label{eq:unif2}
    \begin{gathered}
      \mathit{occ}_{\Sigma}(g,s)\cdot(\mulOp{F,\overline{h_n},u}-\mulOp{F,\overline{h_n},v})
      \ \textbf{=}\ 
      \cntOp{F,\overline{h_n},g,v}-\cntOp{F,\overline{h_n},g,u}
    \end{gathered}
  \end{equation}
  We can generalize Equation~\ref{eq:unif2} to $\mathcal{U}$ by computing
  Equation~\ref{eq:perunifeqCom} for each $u\unif_F v\in \mathcal{U}$ and adding the results
  together and thus deriving the following
  \begin{equation*}
    \begin{gathered}
      \mathit{occ}_{\Sigma}(g,s)\cdot (\mullOp{F,\overline{h_n},\mathcal{U}}-\mulrOp{F,\overline{h_n},\mathcal{U}})    \ \textbf{=}\ 
        \cntrOp{F,\overline{h_n},g,\mathcal{U}}-\cntlOp{F,\overline{h_n},g,\mathcal{U}}. \tag*{\qedhere}
    \end{gathered}
  \end{equation*}
\end{proof}

The \emph{unification condition} provides a necessary condition for unifiability
that we use in the undecidability proof presented in
Section~\ref{app:undecidable}, specifically the relationship between the left
and right sides of the unification equation presented in
Equation~\ref{eq:perunifeqCom}. Sufficiency requires an additional assumption,
namely, the signature contains at least one \emph{associative function
  symbol}. The following example shows an instance of this property.
\begin{example}
  Consider the SOGU problem $F(g(a,a))\unif_F g(F(a),F(a))$ and the unifier
  $\sigma=\{F\mapsto \lambda x. g(x,x)\}$. Observe that
  \begin{align*}
    \mathit{occ}_{\Sigma}(a,g(x,x)) \cdot (\ \mullOp{F,2,F(g(a,a))}- \mulrOp{F,2,g(F(a),F(a))}\ )
    \ \textbf{=}\  0\cdot(1-2)  \ \textbf{=}\  0
  \end{align*}
  and for the right side of Equation~\ref{eq:perunifeqCom} we get
  \begin{gather*}
    \cntrOp{F,h,a,g(F(a),F(a))} -\cntlOp{F,h,a,F(g(a,a))} \ \textbf{=}\  4-4  \ \textbf{=}\  0.
  \end{gather*}
\end{example}

\section{Undecidability of ASOGU Unification}
\label{app:undecidable}
\begin{figure*}
  \centering
  \begin{tikzpicture}[ level 1/.style={level distance = 1.1cm, sibling distance
      = 4cm}, level 2/.style={level distance = 1.1cm, sibling distance = 4cm},
    level 3/.style={level distance = 1.1cm, sibling distance = 3.5cm}, level
    4/.style={level distance = 1.1cm, sibling distance = 2cm}, ]
    \node{${\color{red} x^2yz - 7xyz - x^2z + 7xz - 14x - 2x^2y + 2x^2 + 14xy}$}
    child {node
      {${\color{red}xyz} {\color{blue}- 7yz} {\color{red}- xz} {\color{green} +
          7z} - 14 {\color{red}- 2xy} {\color{red}+ 2x} {\color{blue}+ 14y}$}
      child {node {$-14$} edge from parent node[circle,fill=white] {{\small 0}}
      } child {node {{\color{red}$xyz - xz - 2xy + 2x$}} child { node
          {$ {\color{blue}yz} {\color{green}- z} {\color{blue}- 2y} + 2$} child
          {node {$2$} edge from parent node[circle,fill=white] {{\small 0}} }
          child {node {{\color{blue}$yz- 2y$}} child {node
              {{\color{green}$z-2$}} child {node {$-2$} edge from parent
                node[circle,fill=white] {{\small 0}} } child {node
                {{\color{green}$z$}} child {node {$1$} } edge from parent
                node[circle,fill=white] {z} } } edge from parent
            node[circle,fill=white,inner sep=1pt] {y} } child {node
            {{\color{green}$- z$}} child {node {$-1$}} edge from parent
            node[circle,fill=white] {z} } } edge from parent
        node[circle,fill=white] {x}} child {node {{\color{blue}$7yz + 14y$}}
        child {node {$ {\color{green}7z} + 14$} child {node {$14$} edge from
            parent node[circle,fill=white] {{\small 0}} } child {node
            {{\color{green}$7z$}} child {node {$7$}} edge from parent
            node[circle,fill=white] {z} } } edge from parent
        node[circle,fill=white] {y} } child {node {{\color{green}$7z$}} child
        {node {7}} edge from parent node[circle,fill=white] {z} } }; 
        
    \draw (-4.5,.25) rectangle (4.5,-1.25);
    \draw (-3.8,-2) rectangle (-.2,-3.5); 
    \draw (1,-2) rectangle (3,-3.5); 
    \draw (5.7,-2) rectangle (6.3,-3.5); 
    \draw (-2.8,-4.2) rectangle (-1.2,-5.7);
    \draw (-.3,-4.1) rectangle (.3,-5.7); 
    \draw (2.7,-4.2) rectangle (3.3,-5.7);
    \draw (-1.3,-6.4) rectangle (-.7,-7.9);
  \end{tikzpicture}
  \caption{We recursively apply reduction and monomial grouping decomposition
    (Definition~\ref{def:monogrp}) to the polynomial at the root of the tree. In each box, the lower polynomial is the reduction of the upper polynomial by the unknown labeling the edge to the parent box. By {\small 0}, we denote the monomial grouping 0, and x,y, and z denote the groupings associated with unknowns.}
  \label{fig:recdecom}
\end{figure*}

We now use the machinery we built in the previous section to encode
\emph{Diophantine equations} in unification problems. As a result, we can
transfer undecidability results for Diophantine equations to ASOGU
unification problems. Our undecidability result hinges on
Lemma~\ref{lem:allequal}. Observe that for SOGU, two terms might have an
equal number of occurrences of a symbol without being syntactically
equal. Hence, we introduce an associative binary
function symbol $g$ to solve this issue. For the remainder of this section, we
consider a finite signature $\Sigma$ such that $\{g,a\}\subseteq \Sigma$,
$\mathit{arity}(g) = 2$, $\mathit{arity}(a) = 0$, and
$\mathcal{A}=\{g(x,g(y,z))= g(g(x,y),z)\}$. We will write $g$ in \textit{flattened form}
(see Section~\ref{sect:prelims}). Note that since our signature only consists of
a single constant, strictly speaking, we only require the \emph{weaker} property
of \emph{power associativity}.

We now introduce the basic definitions needed to describe our translation from
polynomials to terms. By $p(\overline{x_n})$ we denote a polynomial in reduced
form\footnote{multiplication is fully distributed over addition and combining
  like terms.} with integer coefficients over the unknowns $x_1,\ldots,x_n$ ranging
over the natural numbers and by $\mathit{mono}(p(\overline{x_n}))$ we denote the
set of monomials of $p(\overline{x_n})$. Given a polynomial $p(\overline{x_n})$
and $1\leq i \leq n$ by $\mathit{div}(p(\overline{x_n}),x_i)$ we denote that
$x_i$ \textit{divides} $p(\overline{x_n})$.
Furthermore,
$\mathit{deg}(p(\overline{x}_n)) = \max\{ k\mid k\geq 0\wedge m=x_i^{k}\cdot q(\overline{x_n})
\wedge 1\leq i \leq n\wedge m\in \mathit{mono}(p(\overline{x_n}))\}$.  Given a polynomial
$p(\overline{x_n})$, a polynomial $p'(\overline{x_n})$ is a sub-polynomial of
$p(\overline{x_n})$ if
$\mathit{mono}(p'(\overline{x_n}))\subseteq \mathit{mono}(p(\overline{x_n}))$. Using the above definition, we define distinct sub-polynomials based on divisibility by
one of the input unknowns (Definition~\ref{def:monogrp}). See
Figure~\ref{fig:recdecom} for an illustration of the procedure defined in
Definition~\ref{def:monogrp} recursively applied to a polynomial.
\begin{definition}[monomial groupings]
  \label{def:monogrp}
  Let $p(\overline{x_n})= q(\overline{x_n})+c$ be a polynomial where
  $q(\overline{x_n})$ does not have a constant term and $c\in \mathbb{Z}$,
  $<_{p(\overline{x_n})}$ be a total linear ordering on $\overline{x_n}$, and
  for all $0\leq j\leq n$,
  $S_{x_j} = \{m \mid m\in \mathit{mono}(p(\overline{x_n})) \wedge \forall i
  (1\leq i\leq n\wedge x_i<_{p(\overline{x_n})} x_j \Rightarrow \neg
  \mathit{div}(m,x_i))\}$, that is the set of monomials not divisible by
any variable smaller than $x_j$. Then

  \begin{itemize}
  \item $p(\overline{x_n})_0 = c$,
  \item $p(\overline{x_n})_{x_j} = 0$ if there does not exist $m\in S_{x_j}$
    such that $\mathit{div}(m,x_j)$,
  \item otherwise, $p(\overline{x_n})_{x_j} = p'(\overline{x_n})$, where
    $p'(\overline{x_n})$ is the sub-polynomial of $p(\overline{x_n})$ such that
    $\mathit{mono}(p'(\overline{x_n}))= \{ m\mid m\in S_{x_j} \wedge
    \mathit{div}(m,x_j)\}$.
  \end{itemize}
  Furthermore, let $p(\overline{x_n})_{x_j} = x_j\cdot p'(\overline{x_n})$. Then
  $p(\overline{x_n})_{x_j} \downarrow = p'(\overline{x_n})$ is the
  \emph{reduction} of $p(\overline{x_n})_{x_j}$, i.e., the degree of $x_j$ in $p(\overline{x_n})_{x_j}$ is reduced by one.
\end{definition}
Essentially, monomial groupings are a way to partition a given polynomial
$p(\overline{x_n})$ with respect to an ordering $<_{p(\overline{x_n})}$ on the
unknowns $\overline{x_n}$. This partition results in a set of subpolynomials of
$p(\overline{x_n})$ and for each subpolynomial, its monomials share a common
unknown, thus implying that a reduction of these subpolynomials always exists.
\begin{example}
  Consider the polynomial
  \begin{align*}
    p(x,y,z) &= x^2yz - 7xyz - x^2z + 7xz - 12z\\
             &- 14x - 2x^2y + 2x^2 + 14xy + 12yz - 24y + 24
  \end{align*}
  Assuming the unknowns are ordered $x<_{p(x,y,z)}y<_{p(x,y,z)}z$, its monomial
  grouping are as follows:
  \begin{align*}
    p(x,y,z)_0 =&\ 24\\
    p(x,y,z)_x =&\ 
                  x^2yz - 7xyz - x^2z + 7xz  - 14x - 2x^2y + 2x^2 + 14xy\\
    p(x,y,z)_y =&\  12yz - 24y\\
    p(x,y,z)_z =&\  - 12z
  \end{align*}
  The reduction of these groupings is as follows:
  \begin{align*}
    p(x,y,z)_x\downarrow =&\  xyz - 7yz - xz + 7z  - 14 - 2xy + 2x + 14y\\
    p(x,y,z)_y\downarrow =&\  12z - 24\\
    p(x,y,z)_z\downarrow =&\  - 12
  \end{align*}
  In Figure~\ref{fig:recdecom}, we recursively apply reduction and grouping
  decomposition to $p(x,y,z)_x$. The illustrated procedure is at the heart of
  our encoding (Definition~\ref{def:convert}).
\end{example}
We now define a second-order term representation for arbitrary polynomials as
follows:
\begin{definition}[$n$-Converter]
  \label{def:convert}
  Let $p(\overline{x_n})$ be a polynomial and $F\in \Varsf^n$. The
  positive and negative second-order term representation of $p(\overline{x_n})$
  denoted as $\cvtpOp{F,p(\overline{x_n})}$ and $\cvtnOp{F,p(\overline{x_n})}$
  respectively are recursively defined as follows:
  \begin{itemize}
  \item if $p(\overline{x_n}) = p(\overline{x_n})_0=0$, then
    $$\cvtpOp{F,p(\overline{x_n})}=\cvtnOp{F,p(\overline{x_n})} =
    a$$ \vspace{.5em}
  \item if $p(\overline{x_n}) = p(\overline{x_n})_0=c\geq 1$, then \vspace{.5em}
    \begin{itemize}
    \item
      $\cvtpOp{F,p(\overline{x_n})}=g(\overbrace{a,\ldots,a}^{\mathclap{|p(\overline{x_n})_0|+1
        \mbox{ occurences}}})$ \vspace{.5em}

    \item $\cvtnOp{F,p(\overline{x_n})}=a$.  \vspace{.5em}
    \end{itemize}
  \item if $p(\overline{x_n}) = p(\overline{x_n})_0<0$, then \vspace{.5em}
    \begin{itemize}
    \item $\cvtpOp{F,p(\overline{x_n})}=a$.\vspace{.5em}
    \item
      $\cvtnOp{F,p(\overline{x_n})}=g(\overbrace{a,\ldots,a}^{\mathclap{|p(\overline{x_n})_0|+1
        \mbox{ occurences}}})$ \vspace{.5em}
    \end{itemize}

  \item if $p(\overline{x_n}) \not = p(\overline{x_n})_0$ and
    $p(\overline{x_n})_0=0$, then for $\star \in \{+,-\}$,
    $\cvtsOp{F,p(\overline{x_n})}= F(t_1,\ldots,t_n)$\vspace*{.5em}
    \begin{itemize}
    \item where for all $1\leq i\leq n$,
      $t_i =\cvtsOp{F,p(\overline{x_n})_{x_i}\downarrow}$.\vspace*{.5em}
    \end{itemize}

  \item if $p(\overline{x_n}) \not = p(\overline{x_n})_0$ and
    $p(\overline{x_n})_0\geq 1$, then \vspace{.5em}
    \begin{itemize}
    \item $\cvtpOp{F,p(\overline{x_n})}=g(t,F(t_1,\ldots,t_n))$ where \vspace{.5em}
      \begin{itemize}
      \item
        $t=\overbrace{a,\ldots,a}^{\mathclap{|p(\overline{x_n})_0| \mbox{ occurences}}}$ and~\footnote{Remember that nested associative functions are written in flattened form.}
        \vspace{.5em}
      \item for all $1\leq i\leq n$,
        $t_i = \cvtpOp{F,p(\overline{x_n})_{x_i}\downarrow}$.  \vspace{.5em}
      \end{itemize}
    \item $\cvtnOp{F,p(\overline{x_n})}=F(t_1,\ldots,t_n)$ where \vspace{.5em}
      \begin{itemize}

      \item for all $1\leq i\leq n$,
        $t_i= \cvtnOp{F,p(\overline{x_n})_{x_i}\downarrow}$.\vspace{.5em}
      \end{itemize}

    \end{itemize}
  \item if $p(\overline{x_n}) \not = p(\overline{x_n})_0$ and
    $p(\overline{x_n})_0< 0$, then
    \begin{itemize}
    \item $\cvtpOp{F,p(\overline{x_n})}=F(t_1,\ldots,t_n)$ where \vspace{.5em}
      \begin{itemize}
      \item for all $1\leq i\leq n$,
        $t_i=\cvtpOp{F,p(\overline{x_n})_{x_i}\downarrow}$.\vspace{.5em}
      \end{itemize}
    \item $\cvtnOp{F,p(\overline{x_n})}=g(t,F(t_1,\ldots,t_n))$ where \vspace{.5em}
      \begin{itemize}
      \item
        $t=\overbrace{a,\ldots,a}^{\mathclap{|p(\overline{x_n})_0| \mbox{ occurences}}}$ and
        \vspace{.5em}
      \item for all $1\leq i\leq n$,
        $t_i = \cvtnOp{F,p(\overline{x_n})_{x_i}\downarrow}$.  \vspace{.5em}
      \end{itemize}
    \end{itemize}

  \end{itemize}
\end{definition}
Intuitively, the $n$-converter takes a polynomial in $n$ unknowns and separates
it into $n+1$ subpolynomials disjoint with respect to monomial groupings. Each
of these sub-polynomials is assigned to one of the arguments of the second-order
variable (except the subpolynomial representing an integer constant), and the
$n$-converter is called recursively on these sub-polynomials. The process stops
when all the sub-polynomials are integers. This procedure is terminating as
polynomials have a maximum degree. Example~\ref{ex:monomial} illustrates the
construction of a term from a
polynomial. Example~\ref{ex:termmul}~\&~\ref{ex:termcnt} construct the
$n$-multiplier and $n$-counter of the resulting term, respectively.
\begin{figure*}
  \centering
  \begin{tikzpicture}[ level 1/.style={level distance = 1.5cm, sibling distance
      = 1.5cm}, level 2/.style={level distance = 1.5cm, sibling distance =
      1.8cm}, level 3/.style={level distance = 1.5cm, sibling distance = 1cm},
    level 4/.style={level distance = 1.5cm, sibling distance = 1cm}, ]
    \node[xshift=-4cm]{$3\cdot x^3 + xy -2\cdot y^2 - 2$} child {node {$-2$}} child
    {node {$3\cdot x^2 + y$} child {node {$3\cdot x$} child {node {$3$}} } child {node
        {$1$}} } child {node {$-2\cdot y $} child {node {$- 2$}} };

    \node[xshift=0cm]{$g$} child {node {$a,a$}} child {node {$F$} child {node
        {$F$} child {node {$F$} child {node {$a$}} child {node {$a$}} } child
        {node {$a$}} } child {node {$F$} child {node {$a$}} child {node
          {$g(a,a,a)$}} } };

    \node[xshift=6cm,yshift=1.5cm]{} child{ node {$F$} child {node {$F$} child
        {node {$F$} child {node {$g(a,a,a,a)$}} child {node {$a$}} } child {node
          {$g(a,a)$}} } child {node {$F$} child {node {$a$}} child {node {$a$}}
      } edge from parent[draw=none]};

    \draw (-4,-7) node {$p(x,y)$}; \draw (0,-7) node {$\cvtnOp{F,p(x,y)}$};
    \draw (6,-7) node {$\cvtpOp{F,p(x,y)}$}; \draw[dotted] (-5.25,-1.2)
    rectangle (-3,-4.75); \draw[dotted] (-1.4,-2.5) rectangle (1,-6.25);
    \draw[dotted] (3.15,-1.25) rectangle (6.2,-5);

  \end{tikzpicture}
  \caption{Applying Definition~\ref{def:convert} to the polynomial of
    Example~\ref{ex:monomial}, we derive $\cvtnOp{F,p(x,y)}$ and
    $\cvtpOp{F,p(x,y)}$. The boxed section of the polynomial tree results in the
    boxed sections of the two term trees. The precise construction is described
    in Example~\ref{ex:monomial}.}
  \label{fig:converterex}
\end{figure*}
\begin{example}\label{ex:monomial}
  Consider the polynomial $p(x,y) = 3\cdot x^3 + xy -2\cdot y^2 - 2$. The positive and
  negative terms representing this polynomial are as follows (See
  Figure~\ref{fig:converterex} for a tree representation):
  \begin{center}
    \textbf{ Positive $n$-Converter}
  \end{center}
  \begin{align*}
    \cvtpOp{F,3\cdot x^3 + xy -2\cdot y^2 - 2} \ \textbf{=}\ &   F(\cvtpOp{F,3\cdot x^2 + y},\cvtpOp{F,-2\cdot y})\\ \vspace*{.5em}
    \cvtpOp{F,3\cdot x^2 + y} \ \textbf{=}\ &  
                                          F(\cvtpOp{F,3\cdot x},\cvtpOp{F,1})\\ \vspace*{.5em}
    \cvtpOp{F,3\cdot x}  \ \textbf{=}\ &  F(\cvtpOp{F,3},a)\\ \vspace*{.5em}
    \cvtpOp{F,3}  \ \textbf{=}\ & g(a,a,a,a)\\ \vspace*{.5em}
    \cvtpOp{F,1}  \ \textbf{=}\ &  g(a,a)\\ \vspace*{.5em}
    \cvtpOp{F,-2\cdot y} \ \textbf{=}\ & F(a,a) \vspace*{.5em}
  \end{align*}
       
    \begin{center}
      \textbf{ Negative $n$-Converter}
    \end{center}
    \begin{align*}
      \cvtnOp{F,3\cdot x^3 + xy -2\cdot y^2 - 2} \ \textbf{=}\ &
                                                         g(a,a,F(\cvtnOp{F,3\cdot x^2 + y},\cvtnOp{F,-2\cdot y}) )\\ \vspace*{.5em}
      \cvtnOp{F,3\cdot x^2 + y}\ \textbf{=}\ & F(\cvtnOp{F,3\cdot x},\cvtnOp{F,1})\\ \vspace*{.5em}
      \cvtnOp{F,3\cdot x} \ \textbf{=}\ & F(a,a)\\ \vspace*{.5em}
      \cvtnOp{F,1} \ \textbf{=}\ & a\\ \vspace*{.5em}
      \cvtnOp{F,-2\cdot y} \ \textbf{=}\ & F(a,g(a,a,a)) \vspace*{.5em}
    \end{align*}
    The result is the following two terms:
    \begin{align*}
      \cvtpOp{F,3\cdot x^3 + xy -2\cdot y^2 - 2}\ \textbf{=}\ &  F(F(F(g(a,a,a,a),a),g(a,a)),F(a,a))\\
      \cvtnOp{F,3\cdot x^3 + xy -2\cdot y^2 - 2} \ \textbf{=}\ & g(a,a,F(F(F(a,a),a),F(a,g(a,a,a)))
    \end{align*}
  \end{example}

  \begin{example}
    \label{ex:termmul}
    Consider the terms constructed in Example~\ref{ex:monomial}, that is, let
    $s=\cvtpOp{F,3\cdot x^3 + xy -2\cdot y^2 - 2}$ and
    $t=\cvtnOp{F,3\cdot x^3 + xy -2\cdot y^2 - 2}$. The $n$-multiplier is computed as
    follows:
    \begin{center}
      \textbf{ Positive $n$-multiplier}
    \end{center}
    \begin{align*}
      \mulOp{F,x,y,s}  \ \textbf{=}\ & 1+x\cdot \mulOp{F,x,y,\cvtpOp{F,3\cdot x^2 + y}} +\\ &
                                                                                      y\cdot \mulOp{F,x,y,\cvtpOp{F,-2\cdot y}} \\ \vspace*{.5em}
      \mulOp{F,x,y,\cvtpOp{F,3\cdot x^2 + y}}  \ \textbf{=}\ & 
                                                           1+ x\cdot\mulOp{F,x,y,\cvtpOp{F,3\cdot x}}\\ \vspace*{.5em}
      \mulOp{F,x,y,\cvtpOp{F,3\cdot x}}  \ \textbf{=}\ &  1\\ \vspace*{.5em}
      \mulOp{F,x,y,\cvtpOp{F,-2\cdot y}} \ \textbf{=}\ & 1 \vspace*{.5em}
    \end{align*}
    \begin{center}
      \textbf{ Negative $n$-multiplier}
    \end{center}
    \begin{align*}
      \mulOp{F,x,y,t}  \ \textbf{=}\ & 1+x\cdot \mulOp{F,x,y,\cdot\cvtnOp{F,3\cdot x^2 + y}} + \\
                                     & y\cdot \mulOp{F,x,y,\cvtnOp{F,-2\cdot y}}\\
      \mulOp{F,x,y,\cvtnOp{F,3\cdot x^2 + y}}  \ \textbf{=}\ & 1+ x\cdot\mulOp{F,x,y,\cvtnOp{F,3\cdot x}} \\
      \mulOp{F,x,y,\cvtnOp{F,3\cdot x}}  \ \textbf{=}\ & 1\\
      \mulOp{F,x,y,\cvtnOp{F,-2\cdot y}} \ \textbf{=}\ & 1 \vspace*{.5em}
    \end{align*}
    \noindent Thus, $\mulOp{F,x,y,s} = \mulOp{F,x,y,t} = 1+x+x^2+y$
  \end{example}
  \begin{example}
    \label{ex:termcnt}
    Consider the terms constructed in Example~\ref{ex:monomial}. The $n$-counter
    is computed as follows:
    \begin{center}
      \textbf{ Positive $n$-counter}
    \end{center}
    \begin{align*}
      \cntOp{F,x,y,a,s}  \ \textbf{=}\ & x\cdot \cntOp{F,x,y,a,\cvtpOp{F,3\cdot x^2 + y}} + \\ &  y\cdot \cntOp{F,x,y,a,\cvtpOp{F,-2\cdot y}}\\
      \cntOp{F,x,y,a,\cvtpOp{F,3\cdot x^2 + y}}  \ \textbf{=}\ & x\cdot \cntOp{F,x,y,a,\cvtpOp{F,3\cdot x}}+ 2\cdot y\\
      \cntOp{F,x,y,a,\cvtpOp{F,3\cdot x}}  \ \textbf{=}\ & x\cdot \cntOp{F,x,y,a,\cvtpOp{F,3}}+y\\
      \cntOp{F,x,y,a,\cvtpOp{F,3}} \ \textbf{=}\ & 4\\
      \cntOp{F,x,y,a,\cvtpOp{F,-2\cdot y}}  \ \textbf{=}\ & x+y\\
    \end{align*}
    \begin{center}
      \textbf{Negative $n$-counter}
    \end{center}
    \begin{align*}
      \cntOp{F,x,y,a,t} \ \textbf{=}\ & 2+ x\cdot \cntOp{F,x,y,a,\cvtnOp{F,3\cdot x^2 + y}}+\\ & y\cdot \cntOp{F,x,y,a,\cvtnOp{F,-2\cdot y}}\\ \vspace*{.5em}
      \cntOp{F,x,y,a,\cvtnOp{F,3\cdot x^2 + y}} \ \textbf{=}\ & x\cdot \cntOp{F,x,y,a,\cvtnOp{F,3\cdot x}}+\\ & y\\\vspace*{.5em}
      \cntOp{F,x,y,a,\cvtnOp{F,3\cdot x}} \ \textbf{=}\ & x+y\\ \vspace*{.5em}
      \cntOp{F,x,y,a,\cvtnOp{F,-2\cdot y}} \ \textbf{=}\  &
                                                        x +y\cdot\cntOp{F,x,y,a,\cvtnOp{F,-2}}\\ \vspace*{.5em}
      \cntOp{F,x,y,a,\cvtnOp{F,-2}} \ \textbf{=}\ & 3 \vspace*{.5em}
    \end{align*}
    \noindent Thus,
    \begin{align*}
      p(x,y)\ \textbf{=}\ & \cntOp{F,x,y,a,s}\ \textbf{=}\  4\cdot x^3 +x^2 y + 3\cdot xy +y^2\\ \vspace*{.5em} 
      q(x,y)\ \textbf{=}\ &  \cntOp{F,x,y,a,t}\ \textbf{=}\  x^3 + x^2 y + 2 \cdot xy + 3\cdot y^2 + 2\\ \vspace*{.5em} 
      p(x,y)-q(x,y) \ \textbf{=}\ & 3x^3+xy-2\cdot y^2-2 \vspace*{.5em} 
    \end{align*}
  \end{example}
  Using the operator introduced in Definition~\ref{def:convert}, we can
  transform a polynomial with integer coefficients into an ASOGU
  problem. The next definition describes the process:
  \begin{definition}
    \label{def:polyUnif}
    Let $p(\overline{x_n})$ be a polynomial and $F\in \Varsf^n$. Then
    $(\mathcal{U},\mathcal{A},F)$ is the ASOGU problem induced by
    $p(\overline{x_n})$ where
    $\mathcal{U}=\{ \cvtnOp{F,p(\overline{x_n})} \unif_F \cvtpOp{F,p(\overline{x_n})}\}$.
  \end{definition}
  \begin{example}
    For the polynomial presented in Example~\ref{ex:monomial}, we can build the
    unification problem $\mathcal{U}=\{s\unif t\}$ where
    \begin{align*}
      s\ \textbf{=}\ &  F\bbbigpl F\bbigpl F\bigpl g(a,a,a,a),a\bigpr ,g(a,a)\bbigpr,F(a,a)\bbbigpr\\
      t\ \textbf{=}\ & g\bbbigpl a,a,F\bbigpl F\bigpl F(a,a),a\bigpr,F\bigpl a,g(a,a,a)\bigpr\bbigpr\bbbigpr
    \end{align*}
    Observe that the $n$-converter will always produce a flex-rigid unification
    equation as long as the input polynomial is of the form
    $p(\overline{x_n})= p'(\overline{x_n})+c$ where $c\not = 0$. When $c=0$, we
    get a flex-flex unification equation, and there is always a solution for
    both the polynomial and the unification equation.
  \end{example}
  The result of this translation is that the $n$-counter captures the structure
  of the polynomial, and the $n$-multipliers cancel out.
  \begin{theorem}
      \label{prop:muleqpoly}
    Let $n\geq 1$, $p(\overline{x_n})$ be a polynomial, and $(\mathcal{U},$
    $\mathcal{A},F)$ an ASOGU problem induced by $p(\overline{x_n})$ where
    $\mathcal{U}=\{ \cvtnOp{F,p(\overline{x_n})} \unif_F
    \cvtpOp{F,p(\overline{x_n})}\}$. Then
    \begin{gather*}
      p(\overline{x_n}) \ \textbf{=}\ \cntrOp{F,\overline{x_n},a,\mathcal{U}} -
      \cntlOp{F,\overline{x_n},a,\mathcal{U}}
      \mbox{ \quad and \quad }
      0 \ \textbf{=}\ \mullOp{F,\overline{x_n},\mathcal{U}}
      -\mulrOp{F,\overline{x_n},\mathcal{U}}.
    \end{gather*}
  \end{theorem}
  \begin{proof}
    We proceed by induction on $\mathit{deg}(p(\overline{x_n}))$.
    \proofcase{Base case:} When $\mathit{deg}(p(\overline{x_n}))=0$, then
    $p(\overline{x_n})= c$ for $c\in \mathbb{Z}$. We have the following cases:
    \begin{itemize}
    \item if $c=0$, then $\mathcal{U}=\{a \unif_F a\}$. This implies that
      $\cntOp{F,\overline{x_n},a,a}= \cntOp{F,\overline{x_n},a,a} = 1$ and 
      $\mulOp{F,\overline{x_n},a}= \mulOp{F,\overline{x_n},a}=0$, i.e., the counter and multiplier of the left and right side are equivalent,
      \vspace*{.5em}
    \item if $c>0$, then $\mathcal{U}=\{s \unif_F t\}$ where
      $\mathit{occ}_{\Sigma}(a,t)$ $=c+1$ and $\mathit{occ}_{\Sigma}(a,s) = 1$. This
      implies that
      $\cntOp{F,\overline{x_n},a,t}-\cntOp{F,\overline{x_n},a,s}= c$ and
      $\mulOp{F,\overline{x_n},s}= \mulOp{F,\overline{x_n},t}=0$, i.e., the counter of the left and right side cancel out to $c$ and the multipliers are equivalent. 
      \vspace*{.5em}
    \item if $c<0$, then $\mathcal{U}=\{s \unif_F t\}$ where
      $\mathit{occ}_{\Sigma}(a,t) = 1$ and $\mathit{occ}_{\Sigma}(a,s) = c+1$. This
      implies that
      $\cntOp{F,\overline{x_n},a,t}-\cntOp{F,\overline{x_n},a,s} = c$ and
      $\mulOp{F,\overline{x_n},s}= \mulOp{F,\overline{x_n},t}=0$, i.e., the counter of the left and right side cancel out to $c$ and the multipliers are equivalent. 
    \end{itemize}
    These arguments complete the base case.%
    \proofcase{Step case:} Let $\mathcal{U}=\{s \unif_F t\}$. We assume for our induction hypothesis that for all
    polynomials $p(\overline{x_n})$ with $\mathit{deg}(p(\overline{x_n}))\leq k$
    the statement holds, and show the statement holds for polynomials
    $p(\overline{x_n})$ with $\mathit{deg}(p(\overline{x_n}))=k+1$. observe that
    \begin{align*}
      \cntOp{F,\overline{x_n},a,t}\ \textbf{=}\ & \cntOp{F,\overline{x_n},a,  \cvtpOp{F,p(\overline{x_n})}} \ \textbf{=}\ \cntOp{F,\overline{x_n},a,  \cvtpOp{F,p(\overline{x_n})_0}}  +\\
                                                 &\sum_{i=1}^n x_i\cdot \cntOp{F,\overline{x_n},a,  \cvtpOp{F,p(\overline{x_n})_{x_i}\downarrow}} \\
      \cntOp{F,\overline{x_n},a,s}\ \textbf{=}\ & \cntlOp{F,\overline{x_n},a,  \cvtnOp{F,p(\overline{x_n})}} \ \textbf{=}\
                                                   \cntOp{F,\overline{x_n},a,  \cvtnOp{F,p(\overline{x_n})_0}}  +\\
                                                 &\sum_{i=1}^n x_i\cdot \cntOp{F,\overline{x_n},a,  \cvtnOp{F,p(\overline{x_n})_{x_i}\downarrow}}
    \end{align*}
    Observe that $a$ is an arity zero constant, thus we do not increment the sum during the recursion (See Definition~\ref{def:ncounter}, bullets 3 and 4). Furthermore, when $p(\overline{x_n})_0=0$, both $\cntOp{F,\overline{x_n},a,  \cvtpOp{F,p(\overline{x_n})_0}}$ and  $\cntOp{F,\overline{x_n},a,  \cvtnOp{F,p(\overline{x_n})_0}}$ are spuriously equivalent to $1$ rather then $0$. However, the terms cancel and do not influence the results).  By the induction hypothesis, we know that for all $0\leq i\leq n$,
    \begin{align*}
      p(\overline{x_n})_{x_i} & \ \textbf{=}\ \cntOp{F,\overline{x_n},a,
                                \cvtpOp{F,p(\overline{x_n})_{x_i}\downarrow}} - \cntOp{F,\overline{x_n},a,  \cvtnOp{F,p(\overline{x_n})_{x_i}\downarrow}}
    \end{align*}
    Thus, we can derive the following:
    \begin{gather*}
      \cntOp{F,\overline{x_n},a,t}-\cntOp{F,\overline{x_n},a,s}  \ \textbf{=}\ p(\overline{x_n})_0 +\sum_{i=1}^n x_i\cdot p(\overline{x_n})_{x_i} = p(\overline{x_n})
    \end{gather*}
    Similarly,
    \begin{align*}
      \mulOp{F,\overline{x_n},t}\ \textbf{=}\ &\mulOp{F,\overline{x_n},  \cvtpOp{F,p(\overline{x_n})}} \ \textbf{=}\   \mulOp{F,\overline{x_n},  \cvtpOp{F,p(\overline{x_n})_0}}+\\
                                               &\sum_{i=1}^n x_i\cdot  \mulOp{F,\overline{x_n},  \cvtpOp{F,p(\overline{x_n})_{x_i}\downarrow}} \\
      \mulOp{F,\overline{x_n},s} \ \textbf{=}\ & \mulOp{F,\overline{x_n},  \cvtnOp{F,p(\overline{x_n})}} \ \textbf{=}\ \mulOp{F,\overline{x_n},  \cvtnOp{F,p(\overline{x_n})_0}}+ \\
                                               &\sum_{i=1}^n x_i\cdot \mulOp{F,\overline{x_n},  \cvtnOp{F,p(\overline{x_n})_{x_i}\downarrow}}
    \end{align*}
    By the induction hypothesis, we know that for all $0\leq i\leq n$,
    \begin{align*}
      0 =& \mulOp{F,\overline{x_n},
           \cvtpOp{F,p(\overline{x_n})_{x_i}\downarrow}}  - \mulOp{F,\overline{x_n}, \cvtnOp{F,p(\overline{x_n})_{x_i}\downarrow}}
    \end{align*}
    Thus, we can derive the following:
    \begin{align*}
        \mulOp{F,\overline{x_n},t}-\mulOp{F,\overline{x_n},s}= & 0 +\sum_{i=1}^n 0\cdot x_i=0 \tag*{\qedhere}
    \end{align*}
  \end{proof}
  A simple corollary of Theorem~\ref{prop:muleqpoly} concerns commutativity of
  unification equations:
  \begin{corollary}
    \label{prop:commute}
    Let $n\geq 1$, $p(\overline{x_n})$ be a polynomial, and
    $(\{s\unif_F t\},\mathcal{A},F)$ an ASOGU problem induced by
    $p(\overline{x_n})$. Then
    $-p(\overline{x_n}) =\cntOp{F,\overline{x_n},a,s} -
    \cntOp{F,\overline{x_n},a,t}.$
  \end{corollary}
  \begin{proof}
    Same as Theorem~\ref{prop:muleqpoly} but swapping terms.
  \end{proof}
  Both $p(\overline{x_n})$ and $-p(\overline{x_n})$ have the same roots, and the induced unification problem is flex-rigid (unless the polynomial is a constant); thus, the induced unification problem uniquely captures the
  polynomial $p(\overline{x_n})$.

  We now prove that the unification condition introduced in
  Lemma~\ref{lem:MainEq} is equivalent to finding the solutions to polynomial
  equations. The following shows how a solution to a polynomial can be obtained
  from the unification condition and vice versa.
  \begin{lemma}\label{lem:reduction-sound-complete}
    Let $p(\overline{x_n})$ be a polynomial, $s = \cvtpOp{F,p(\overline{x_n})}$,
    $t = \cvtnOp{F,p(\overline{x_n})}$ and $(\{ s\unif_F t \},\mathcal{A},F)$ the
    ASOGU problem induced by $p(\overline{x_n})$
    (Definition~\ref{def:polyUnif}). Then there exists non-negative integers
    $h_1,\ldots,h_n\geq 0$ such that
    \begin{gather}\label{eq:sub}
      \sigma =\left\lbrace F\mapsto \lambda \overline{x_n}.g(\overbrace{x_1,\ldots,x_1}^{h_1},\ldots,\overbrace{x_n,\ldots,x_n}^{h_n})\right\rbrace
    \end{gather}
    is a unifier of $\{s\unif_F t\}$ if and only if
    $\{x_i\mapsto h_i\mid 1\leq i\leq n\}$ is a solution to $p(\overline{x_n})=0$.
  \end{lemma}
  \begin{proof}
    Observe, by Theorem~\ref{prop:muleqpoly},
    $\mulOp{F,\overline{h_n},s}
    -\mulOp{F,\overline{h_n},t} = 0$, thus we can ignore the
    $n$-multiplier. We can prove the two directions as follows:
    \proofcase{$\Longrightarrow:$} If $\sigma$ unifies $\{s\unif_F t\}$, then
    $\cntOp{F,\overline{h_n},a,s} =
    \cntOp{F,\overline{h_n},a,t}$ and by Lemma~\ref{lem:MainEq},
    $0 = \cntOp{F,\overline{h_n},a,t}
    -\cntOp{F,\overline{h_n},a,s}.$ By Theorem~\ref{prop:muleqpoly},
    we know that
    $p(\overline{h_n}) = \cntOp{F,\overline{h_n},a,t}
    -\cntOp{F,\overline{h_n},a,s}$. Thus, we derive
    $p(\overline{h_n}) =0$ by transitivity.  \proofcase{$\Longleftarrow:$} If
    $\{x_i\mapsto h_i\mid 1\leq i\leq n\}$ is a solution to
    $p(\overline{x_n}) =0$, we can derive, using Theorem~\ref{prop:muleqpoly},
    that
    $0= \cntOp{F,\overline{h_n},a,t}
    -\cntOp{F,\overline{h_n},a,s}$. Furthermore, we derive that
    $\cntOp{F,\overline{h_n},a,s} =
    \cntOp{F,\overline{h_n},a,t}$. Now, let $\sigma$ be the
    substitution defined in Equation~\ref{eq:sub}. It then follows from the
    definition of the $n$-counter (Definition~\ref{def:ncounter}), that
    $\mathit{occ}_{\Sigma}(a, s\sigma)= \mathit{occ}_{\Sigma}(a, t\sigma)$. Given that
    $g$ is the only other symbol occurring in $s\sigma$ and $t\sigma$ and $g$ is
    associative, it follows that $s\sigma\approx_{\mathcal{A}}t\sigma$. Thus,
    $\sigma$ is a unifier of $\{s\unif_F t\}$.
  \end{proof}
  We have proven that there is a reduction from Hilbert's 10th problem over the
  non-negative integers to ASOGU. We can now state the main result of this
  paper.
  \begin{theorem}\label{thm:sogu-undecidable}
    There exists $n\geq 1$ such that $n$-ASOGU is undecidable.
  \end{theorem}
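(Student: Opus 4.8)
The plan is to prove the theorem by reducing Hilbert's 10$^{th}$ problem (Theorem~\ref{thm:hilbert-10}) to $n$-SOGU unifiability, reusing the encoding built in Lemmas~\ref{lem:MainEq}, \ref{prop:muleqpoly}, and~\ref{lem:unification-condition-undecidable}. Given a polynomial $p(\overline{x_n})$, I would let $(\mathcal{U},F)$ be the $n$-SOGU problem induced by $p$ (Definition~\ref{def:polyUnif}) over the constant $a$, with $n$ the number of variables of $p$. The target equivalence is that $(\mathcal{U},F)$ is unifiable if and only if $p(\overline{x_n})=0$ has a solution over $\mathbb{N}$; undecidability of $n$-SOGU for this $n$ then follows immediately from Theorem~\ref{thm:hilbert-10}, and hence so does the existence claim.

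For the forward direction (unifiable $\Rightarrow$ solvable) I would appeal to the necessary condition directly. If $\sigma=\{F\mapsto\lambda\overline{x_n}.s\}$ is a unifier, set $h_i=\mathit{occ}(x_i,s)$. Lemma~\ref{lem:allequal} gives $\mathit{occ}(a,u\sigma)=\mathit{occ}(a,v\sigma)$ for every equation, so the unification condition of Lemma~\ref{lem:MainEq} holds at $\overline{h_n}$; since the multiplier terms cancel for induced problems (Lemma~\ref{prop:muleqpoly}), Lemma~\ref{lem:reduction-sound-complete} yields $p(\overline{h_n})=0$, so $\overline{h_n}$ is a solution.

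The backward direction (solvable $\Rightarrow$ unifiable) is where the real work lies, and it is the step I expect to be the main obstacle. The unification condition is only \emph{necessary}: equality of the counts of $a$ does not by itself force the two sides of an equation to be the same term, so undecidability of the equalizer problem (Lemma~\ref{lem:unification-condition-undecidable}) does not transfer to unifiability for free. From a numerical solution $\overline{h_n}$ of $p(\overline{x_n})=0$ I would have to exhibit an actual unifier $\sigma=\{F\mapsto\lambda\overline{x_n}.s\}$ with $\mathit{occ}(x_i,s)=h_i$ and $u\sigma=v\sigma$ for every equation. The plan is to construct $s$ so that it realizes exactly the occurrence profile $\overline{h_n}$ while positioning $a$ so that the per-equation discrepancy introduced by the converter (Definition~\ref{def:convert}) is annihilated precisely when $p(\overline{h_n})=0$; here I would exploit the freedom in the choice of the ground filler terms of the converter to align the differing subterms. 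Showing that counting equality actually upgrades to structural equality — that no residual mismatch survives in $u\sigma$ and $v\sigma$ — is the delicate point that distinguishes this theorem from the already-settled equalizer problem.

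Combining the two directions gives the desired equivalence between unifiability of the induced problem and solvability of $p(\overline{x_n})=0$. Since the latter is undecidable by Theorem~\ref{thm:hilbert-10}, deciding unifiability of these $n$-SOGU problems is undecidable, witnessing an $n\geq 1$ for which $n$-SOGU is undecidable.
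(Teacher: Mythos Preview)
Your route differs from the paper's. You attempt a direct reduction from Hilbert's 10th to $n$-SOGU unifiability; the paper instead argues by contradiction, factoring through the already undecidable Equalizer Problem (Lemma~\ref{lem:unification-condition-undecidable}). Assuming $n$-SOGU were decidable, the paper runs the decider on an arbitrary $(\mathcal{U},F)$: if it is unifiable (Case~1), one computes a unifier $\{F\mapsto\lambda\overline{y_n}.s\}$ and reads off $h_i=\mathit{occ}(y_i,s)$, which by Lemma~\ref{lem:MainEq} solves Equalizer---this is exactly your forward direction. If it is not unifiable (Case~2), the paper asserts, citing Lemma~\ref{lem:MainEq}, that for every $\overline{h_n}$ some $c\in\Sigma^{\leq 1}$ makes the equation of Lemma~\ref{lem:MainEq} fail for all $k\geq 0$, so Equalizer has no solution. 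No unifier is ever explicitly constructed; the paper never carries out the ``real work'' you describe.

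That said, your caution about the backward direction is on target, and it applies just as much to the paper's Case~2. Lemma~\ref{lem:MainEq} is only a \emph{necessary} condition for unifiability; its contrapositive yields ``if the counting equation fails for $\sigma$'s parameters, then $\sigma$ is not a unifier,'' not ``if no unifier exists, then for every $\overline{h_n}$ and every $k$ the counting equation fails.'' An $n$-SOGU instance can satisfy all occurrence-count constraints at some $\overline{h_n}$ and still be non-unifiable, since equal symbol counts do not force equal terms (e.g.\ $g(F(a),F(b))\unif_F g(b,a)$ with $h=1$). Case~2 is logically the contrapositive of your backward direction, and the appeal to Lemma~\ref{lem:MainEq} does not supply it. So although the two arguments are packaged differently, the step you correctly flagged as the crux is the same missing piece in both; your idea of exploiting the converter's freedom in the ground filler terms is a plausible line of attack, but neither argument as written closes the gap.
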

  \begin{proof}
    The statement follows from the reduction presented in
    Lemma~\ref{lem:reduction-sound-complete} and Theorem~\ref{thm:hilbert-10}.
  \end{proof}
  The following example illustrates this encoding.
  \begin{example}
    Consider the following polynomial:
$$p(x,y) = (x-1)(x-2)(y-1)(y-2)$$
Expansion and reduction result in the following polynomial:
$$ x^2y^2 - 3x^2y - 3xy^2 + 9xy + 2x^2  + 2y^2 - 6y -6x + 4 $$
We encode this polynomial as the following \textit{unification problem} where
$\textbf{F}$ is a second-order variable, $a$ is a constant, $g$ is an
associative binary function symbol:

\begin{gather*}
{
  \textbf{F}\bbbbigpl g\bbbigpl a^6,\textbf{F}\bbigpl \textbf{F}(a,g(a^3,\textbf{F}(a,a))\bbigpr,\textbf{F}(a,g(a^4)))\bbbigpr,g(a^6,\textbf{F}(a,a))\bbbbigpr}\\
\unif\\
{g\bbbbigpl a^4,\textbf{F}\bbbigpl \textbf{F}\bbigpl g\bigpl a^2,\textbf{F}(a,\textbf{F}(a,g(a^2)))\bigpr,g(a^9,\textbf{F}(a,a))\bbigpr,\textbf{F}(a,g(a^3))\bbbigpr\bbbbigpr}
\end{gather*}
where $a^n$ abbreviates a sequence of $n$ occurrences of $a$, and terms are
written in \textit{flattened} form, i.e., intermediary occurrences of
associative function symbols are dropped. Observe that the arity of $\textbf{F}$
is equivalent to the number of unknowns in $p(x,y)$.  We can derive solutions
for this unification problem from the zeros of $p(x,y)$:
\begin{itemize}
\item $p(1,0)=0$. The substitution $ \{F\mapsto \lambda x,y.x\}$ unifies the above terms
  resulting in $g(a^7)\unif g(a^7)$.
\item $p(0,2)=0$. The substitution $ \{F\mapsto \lambda x,y.g(y,y)\}$ unifies the above
  terms resulting in $g(a^{16})\unif g(a^{16})$.
\item $p(1,2)=0$. The substitution $ \{F\mapsto \lambda x,y.g(x,y,y)\}$ unifies the above
  terms resulting in $g(a^{55})\unif g(a^{55})$.
\item $p(2,1)=0$. The substitution $ \{F\mapsto \lambda x,y.g(x,x,y)\}$ unifies the above
  terms resulting in $g(a^{65})\unif g(a^{65})$.
\end{itemize}
As described above, each unknown is associated with an argument of \textbf{F}.
\end{example}

Theorem~\ref{thm:sogu-undecidable} partially answers the question posed in
Section~\ref{sect:introduction} by demonstrating that occurrences of first-order
variables within the unification problem do not impact the decidability of
second-order unification over an associative theory. Furthermore, this holds
when only \emph{one} second-order variable is present.  However, it remains open
whether SOGU is decidable over the empty theory. Compared to the work of Otto, Narendran, and Dougherty~\cite{OTTO19981}, we show that the undecidability of second-order
E-unification is not contingent on the equational theory being a length-reducing
rewrite system. As a final point, it is known that the construction used to demonstrate the undecidability of Diophantine equation solving over $\mathbb{N}$ requires a polynomial with at least $9$ unknowns~\cite{10.5555/164759}. This result extends to our work, implying that the arity of the function
variable $F$ needs to be at least $9$. Thus, the decidability of ASOGU remains
open for function variables of lower arity.

As mentioned earlier, Theorem~\ref{thm:sogu-undecidable} holds when
associativity is replaced with \textit{power associativity}. Furthermore, it is
possible to adjust Definition~\ref{def:convert} such that
Theorem~\ref{prop:muleqpoly} and Corollary~\ref{prop:commute} hold for
$n$-SOGU. However, for the sake of clarity, we omitted these details in the
presentation of our results. Observe that associativity is only required for the \textit{only if} part of
Lemma~\ref{lem:reduction-sound-complete}.

\section{Conclusion and Future Work}
\label{sec:conclusion}
We show that associative second-order ground unification is undecidable using a
reduction from Hilbert's 10$^{th}$ problem over non-negative integers. The reduction required two
novel occurrence counting functions related to the \emph{multiplicity operator}~\cite{DBLP:journals/jsyml/Vrijer85,DBLP:conf/csl/AccattoliK10} and their relationship to the existence of a
unifier. Using these operators, we show how to encode a polynomial in reduced
form in a single unification equation that only contains occurrences of an
associative binary function symbol and a single constant. The reduction holds
even in the case when the binary function symbol is interpreted as \emph{power
  associative} only, i.e., $f(x,f(x,x))= f(f(x,x),x)$. It remains open whether
second-order ground unification (the non-associative case) is decidable. We plan
to investigate how our encoding can be used to discover decidable fragments of
both second-order ground unification and its equational extension. Additionally,
we plan to investigate variants of the presented encoding, e.g., other ways to
encode polynomials in unification equations and their associated unification
problems.

\bibliographystyle{alphaurl}
\bibliography{IEEEabrv,easychair}

\newcommand{\etalchar}[1]{$^{#1}$}
\begin{thebibliography}{ADK{\etalchar{+}}18}

\bibitem[ABD{\etalchar{+}}15]{DBLP:series/natosec/AlurBDF0JKMMRSSSSTU15}
Rajeev Alur, Rastislav Bod{\'{\i}}k, Eric Dallal, Dana Fisman, Pranav Garg,
  Garvit Juniwal, Hadas Kress{-}Gazit, P.~Madhusudan, Milo M.~K. Martin, Mukund
  Raghothaman, Shambwaditya Saha, Sanjit~A. Seshia, Rishabh Singh, Armando
  Solar{-}Lezama, Emina Torlak, and Abhishek Udupa.
\newblock Syntax-guided synthesis.
\newblock In Maximilian Irlbeck, Doron~A. Peled, and Alexander Pretschner,
  editors, {\em Dependable Software Systems Engineering}, volume~40 of {\em
  {NATO} Science for Peace and Security Series, {D:} Information and
  Communication Security}, pages 1--25. {IOS} Press, 2015.
\newblock \href {https://doi.org/10.3233/978-1-61499-495-4-1}
  {\path{doi:10.3233/978-1-61499-495-4-1}}.

\bibitem[ADK{\etalchar{+}}18]{DBLP:conf/cav/AbateDKKP18}
Alessandro Abate, Cristina David, Pascal Kesseli, Daniel Kroening, and
  Elizabeth Polgreen.
\newblock Counterexample guided inductive synthesis modulo theories.
\newblock In Hana Chockler and Georg Weissenbacher, editors, {\em Computer
  Aided Verification - 30th International Conference, {CAV} 2018, Held as Part
  of the Federated Logic Conference, FloC 2018, Oxford, UK, July 14-17, 2018,
  Proceedings, Part {I}}, volume 10981 of {\em Lecture Notes in Computer
  Science}, pages 270--288. Springer, 2018.
\newblock \href {https://doi.org/10.1007/978-3-319-96145-3\_15}
  {\path{doi:10.1007/978-3-319-96145-3\_15}}.

\bibitem[AK10]{DBLP:conf/csl/AccattoliK10}
Beniamino Accattoli and Delia Kesner.
\newblock The structural \emph{lambda}-calculus.
\newblock In Anuj Dawar and Helmut Veith, editors, {\em Computer Science Logic,
  24th International Workshop, {CSL} 2010, 19th Annual Conference of the EACSL,
  Brno, Czech Republic, August 23-27, 2010. Proceedings}, volume 6247 of {\em
  Lecture Notes in Computer Science}, pages 381--395. Springer, 2010.
\newblock \href {https://doi.org/10.1007/978-3-642-15205-4\_30}
  {\path{doi:10.1007/978-3-642-15205-4\_30}}.

\bibitem[BN98]{DBLP:books/daglib/0092409}
Franz Baader and Tobias Nipkow.
\newblock {\em Term rewriting and all that}.
\newblock Cambridge University Press, 1998.

\bibitem[BRO{\etalchar{+}}19]{DBLP:conf/cade/BarbosaROTB19}
Haniel Barbosa, Andrew Reynolds, Daniel~El Ouraoui, Cesare Tinelli, and
  Clark~W. Barrett.
\newblock Extending {SMT} solvers to higher-order logic.
\newblock In Pascal Fontaine, editor, {\em Automated Deduction - {CADE} 27 -
  27th International Conference on Automated Deduction, Natal, Brazil, August
  27-30, 2019, Proceedings}, volume 11716 of {\em Lecture Notes in Computer
  Science}, pages 35--54. Springer, 2019.
\newblock \href {https://doi.org/10.1007/978-3-030-29436-6\_3}
  {\path{doi:10.1007/978-3-030-29436-6\_3}}.

\bibitem[DJ95]{DoughertyJ92}
Daniel~J. Dougherty and Patricia Johann.
\newblock A combinatory logic approach to higher-order e-unification.
\newblock volume 139, pages 207--242, 1995.
\newblock \href {https://doi.org/10.1016/0304-3975(94)00210-A}
  {\path{doi:10.1016/0304-3975(94)00210-A}}.

\bibitem[dV85]{DBLP:journals/jsyml/Vrijer85}
Roel~C. de~Vrijer.
\newblock A direct proof of the finite developments theorem.
\newblock {\em J. Symb. Log.}, 50(2):339--343, 1985.
\newblock \href {https://doi.org/10.2307/2274219} {\path{doi:10.2307/2274219}}.

\bibitem[Far88]{DBLP:journals/apal/Farmer88}
William~M. Farmer.
\newblock A unification algorithm for second-order monadic terms.
\newblock {\em Ann. Pure Appl. Log.}, 39(2):131--174, 1988.
\newblock \href {https://doi.org/10.1016/0168-0072(88)90015-2}
  {\path{doi:10.1016/0168-0072(88)90015-2}}.

\bibitem[GJV98]{DBLP:conf/asian/GanzingerJV98}
Harald Ganzinger, Florent Jacquemard, and Margus Veanes.
\newblock Rigid reachability.
\newblock In Jieh Hsiang and Atsushi Ohori, editors, {\em Advances in Computing
  Science - {ASIAN} '98, 4th Asian Computing Science Conference, Manila, The
  Philippines, December 8-10, 1998, Proceedings}, volume 1538 of {\em Lecture
  Notes in Computer Science}, pages 4--21. Springer, 1998.
\newblock \href {https://doi.org/10.1007/3-540-49366-2\_2}
  {\path{doi:10.1007/3-540-49366-2\_2}}.

\bibitem[Gol81]{DBLP:journals/tcs/Goldfarb81}
Warren~D. Goldfarb.
\newblock The undecidability of the second-order unification problem.
\newblock {\em Theor. Comput. Sci.}, 13:225--230, 1981.
\newblock \href {https://doi.org/10.1016/0304-3975(81)90040-2}
  {\path{doi:10.1016/0304-3975(81)90040-2}}.

\bibitem[Gou94]{Goubault94}
Jean Goubault.
\newblock Higher-order rigid e-unification.
\newblock In Frank Pfenning, editor, {\em Logic Programming and Automated
  Reasoning, 5th International Conference, LPAR'94, Kiev, Ukraine, July 16-22,
  1994, Proceedings}, volume 822 of {\em Lecture Notes in Computer Science},
  pages 129--143. Springer, 1994.
\newblock \href {https://doi.org/10.1007/3-540-58216-9\_34}
  {\path{doi:10.1007/3-540-58216-9\_34}}.

\bibitem[GPS17]{DBLP:journals/ftpl/GulwaniPS17}
Sumit Gulwani, Oleksandr Polozov, and Rishabh Singh.
\newblock Program synthesis.
\newblock {\em Found. Trends Program. Lang.}, 4(1-2):1--119, 2017.
\newblock \href {https://doi.org/10.1561/2500000010}
  {\path{doi:10.1561/2500000010}}.

\bibitem[Hue73]{DBLP:journals/iandc/Huet73}
G{\'{e}}rard~P. Huet.
\newblock The undecidability of unification in third order logic.
\newblock {\em Inf. Control.}, 22(3):257--267, 1973.
\newblock \href {https://doi.org/10.1016/S0019-9958(73)90301-X}
  {\path{doi:10.1016/S0019-9958(73)90301-X}}.

\bibitem[Jez14]{DBLP:conf/icalp/Jez14}
Artur Jez.
\newblock Context unification is in {PSPACE}.
\newblock In Javier Esparza, Pierre Fraigniaud, Thore Husfeldt, and Elias
  Koutsoupias, editors, {\em Automata, Languages, and Programming - 41st
  International Colloquium, {ICALP} 2014, Copenhagen, Denmark, July 8-11, 2014,
  Proceedings, Part {II}}, volume 8573 of {\em Lecture Notes in Computer
  Science}, pages 244--255. Springer, 2014.
\newblock \href {https://doi.org/10.1007/978-3-662-43951-7\_21}
  {\path{doi:10.1007/978-3-662-43951-7\_21}}.

\bibitem[Lev96]{DBLP:conf/rta/Levy96}
Jordi Levy.
\newblock Linear second-order unification.
\newblock In Harald Ganzinger, editor, {\em Rewriting Techniques and
  Applications, 7th International Conference, RTA-96, New Brunswick, NJ, USA,
  July 27-30, 1996, Proceedings}, volume 1103 of {\em Lecture Notes in Computer
  Science}, pages 332--346. Springer, 1996.
\newblock \href {https://doi.org/10.1007/3-540-61464-8\_63}
  {\path{doi:10.1007/3-540-61464-8\_63}}.

\bibitem[Lev14]{DBLP:conf/unif/Levy14}
Jordi Levy.
\newblock On the limits of second-order unification.
\newblock In Temur Kutsia and Christophe Ringeissen, editors, {\em Proceedings
  of the 28th International Workshop on Unification, {UNIF} 2014, Vienna,
  Austria, July 13, 2014}, pages 5--14, 2014.
\newblock URL:
  \url{https://www3.risc.jku.at/publications/download/risc\_5001/proceedings-UNIF2014.pdf\#page=10}.

\bibitem[LSV08]{DBLP:journals/siamcomp/LevySV08}
Jordi Levy, Manfred Schmidt{-}Schau{\ss}, and Mateu Villaret.
\newblock The complexity of monadic second-order unification.
\newblock {\em {SIAM} J. Comput.}, 38(3):1113--1140, 2008.
\newblock \href {https://doi.org/10.1137/050645403}
  {\path{doi:10.1137/050645403}}.

\bibitem[Luc72]{lucchesi1972undecidability}
Claudio~L Lucchesi.
\newblock The undecidability of the unification problem for third order
  languages.
\newblock {\em Report CSRR}, 2059:129--198, 1972.

\bibitem[LV00]{DBLP:journals/iandc/LevyV00}
Jordi Levy and Margus Veanes.
\newblock On the undecidability of second-order unification.
\newblock {\em Inf. Comput.}, 159(1-2):125--150, 2000.
\newblock \href {https://doi.org/10.1006/inco.2000.2877}
  {\path{doi:10.1006/inco.2000.2877}}.

\bibitem[Mat93]{10.5555/164759}
Yuri~V. Matiyasevich.
\newblock {\em Hilbert's tenth problem}.
\newblock MIT Press, Cambridge, MA, USA, 1993.

\bibitem[NQ91]{NipkowQ91}
Tobias Nipkow and Zhenyu Qian.
\newblock Modular higher-order \emph{E}-unification.
\newblock In Ronald~V. Book, editor, {\em Rewriting Techniques and
  Applications, 4th International Conference, RTA-91, Como, Italy, April 10-12,
  1991, Proceedings}, volume 488 of {\em Lecture Notes in Computer Science},
  pages 200--214. Springer, 1991.
\newblock \href {https://doi.org/10.1007/3-540-53904-2\_97}
  {\path{doi:10.1007/3-540-53904-2\_97}}.

\bibitem[OND98]{OTTO19981}
Friedrich Otto, Paliath Narendran, and Daniel~J. Dougherty.
\newblock Equational unification, word unification, and 2nd-order equational
  unification.
\newblock {\em Theoretical Computer Science}, 198(1):1--47, 1998.
\newblock URL:
  \url{https://www.sciencedirect.com/science/article/pii/S0304397597001308},
  \href {https://doi.org/10.1016/S0304-3975(97)00130-8}
  {\path{doi:10.1016/S0304-3975(97)00130-8}}.

\bibitem[PBJK23]{DBLP:conf/lpar/ParsertBJK23}
Julian Parsert, Chad~E. Brown, Mikolas Janota, and Cezary Kaliszyk.
\newblock Experiments on infinite model finding in {SMT} solving.
\newblock In Ruzica Piskac and Andrei Voronkov, editors, {\em {LPAR} 2023:
  Proceedings of 24th International Conference on Logic for Programming,
  Artificial Intelligence and Reasoning, Manizales, Colombia, 4-9th June 2023},
  volume~94 of {\em EPiC Series in Computing}, pages 317--328. EasyChair, 2023.
\newblock URL: \url{https://doi.org/10.29007/slrm}, \href
  {https://doi.org/10.29007/SLRM} {\path{doi:10.29007/SLRM}}.

\bibitem[Sch04]{DBLP:journals/iandc/Schmidt-Schauss04}
Manfred Schmidt{-}Schau{\ss}.
\newblock Decidability of bounded second order unification.
\newblock {\em Inf. Comput.}, 188(2):143--178, 2004.
\newblock URL: \url{https://doi.org/10.1016/j.ic.2003.08.002}, \href
  {https://doi.org/10.1016/J.IC.2003.08.002}
  {\path{doi:10.1016/J.IC.2003.08.002}}.

\bibitem[Sny90]{Snyder90}
Wayne Snyder.
\newblock Higher order e-unification.
\newblock In Mark~E. Stickel, editor, {\em 10th International Conference on
  Automated Deduction, Kaiserslautern, FRG, July 24-27, 1990, Proceedings},
  volume 449 of {\em Lecture Notes in Computer Science}, pages 573--587.
  Springer, 1990.
\newblock \href {https://doi.org/10.1007/3-540-52885-7\_115}
  {\path{doi:10.1007/3-540-52885-7\_115}}.

\bibitem[Tou24]{Tourret24}
Sophie Tourret.
\newblock Invited talk: The hows and whys of higher-order {SMT}.
\newblock In Giles Reger and Yoni Zohar, editors, {\em Proceedings of the 22nd
  International Workshop on Satisfiability Modulo Theories co-located with the
  36th International Conference on Computer Aided Verification {(CAV} 2024),
  Montreal, Canada, July, 22-23, 2024}, volume 3725 of {\em {CEUR} Workshop
  Proceedings}, page~1. CEUR-WS.org, 2024.
\newblock URL: \url{https://ceur-ws.org/Vol-3725/invited1.pdf}.

\end{thebibliography}

\end{document}